\pgfplotsset{width=10cm,compat=1.9}
\newcommand{\cmark}{\ding{51}}%
\newcommand{\xmark}{\ding{55}}%
\begin{document}

\title{On-device Content-based Recommendation with Single-shot Embedding Pruning: A Cooperative Game Perspective}

\author{Hung Vinh Tran}
\email{h.v.tran@uq.edu.au}
\affiliation{%
  \institution{The University of Queensland}
  \city{Brisbane}
  \state{Queensland}
  \country{Australia}
  \postcode{4072}
}
\author{Tong Chen}
\email{tong.chen@uq.edu.au}
\affiliation{%
  \institution{The University of Queensland}
  \city{Brisbane}
  \state{Queensland}
  \country{Australia}
  \postcode{4072}
}

\author{Guanhua Ye}
\email{g.ye@bupt.edu.cn}
\affiliation{%
  \institution{Beijing University of Posts and Telecommunications}
  \city{Beijing}
  \country{China}
}

\author{Quoc Viet Hung Nguyen}
\email{henry.nguyen@griffith.edu.au}
\affiliation{%
  \institution{Griffith University}
  \city{Gold Coast}
  \state{Queensland}
  \country{Australia}
  \postcode{4222}
}

\author{Kai Zheng}
\email{zhengkai@uestc.edu.cn}
\affiliation{%
  \institution{University of Electronic Science and Technology of China}
  \city{Chengdu}
  \country{China}
}

\author{Hongzhi Yin}
\email{h.yin1@uq.edu.au}
\authornote{Corresponding author.}
\affiliation{%
  \institution{The University of Queensland}
  \city{Brisbane}
  \state{Queensland}
  \country{Australia}
  \postcode{4072}
}

\renewcommand{\shortauthors}{Tran et al.}

\begin{abstract}
Content-based Recommender Systems (CRSs) play a crucial role in shaping user experiences in e-commerce, online advertising, and personalized recommendations. However, due to the vast amount of categorical features, the embedding tables used in CRS models pose a significant storage bottleneck for real-world deployment, especially on resource-constrained devices. To address this problem, various embedding pruning methods have been proposed, but most existing ones require expensive retraining steps for each target parameter budget, leading to enormous computation costs. In reality, this computation cost is a major hurdle in real-world applications with diverse storage requirements, such as federated learning and streaming settings. In this paper, we propose  \textbf{Sha}pley \textbf{V}alue-guided \textbf{E}mbedding \textbf{R}eduction (Shaver) as our response. With Shaver, we view the problem from a cooperative game perspective, and quantify each embedding parameter's contribution with Shapley values to facilitate contribution-based parameter pruning. To address the inherently high computation costs of Shapley values, we propose an efficient and unbiased method to estimate Shapley values of a CRS's embedding parameters. Moreover, in the pruning stage, we put forward a field-aware codebook to mitigate the information loss in the traditional zero-out treatment. Through extensive experiments on three real-world datasets, Shaver has demonstrated competitive performance with lightweight recommendation models across various parameter budgets. The source code is available at \textit{https://github.com/chenxing1999/shaver}.
  
\end{abstract}

\maketitle

\newcommand{\meanE}[0]{\mathbb{E}}
\newcommand{\len}[1]{| #1 |}

\section{Introduction}
In the modern era, recommender systems (RSs) play a vital role in assisting users to identify relevant information. 
Meanwhile, on top of user-item interactions, the increasingly available side features of users and items (e.g., user gender and product category) have made content-based RSs (CRSs) \cite{marcuzzo2022recommendation} a step above collaborative filtering-based RSs, especially in cold-start settings. 
According to the Statista report \cite{statista2024}, in the US, online advertising -- a major CRS application -- has doubled its market size from \$107.5 billion in 2018 to \$225 billion in 2023. More recently, with privacy legislation and cyber security awareness on the rise, on-device deployment of CRSs is quickly gaining popularity, where examples include Brave Browser's federated news recommendation \cite{minto2021}, as well as Kuaishou's short video recommendation on mobile devices \cite{gong2022real}.

Most modern CRSs depend on sparse categorical features, such as movie genres and user regions in movie recommendation \cite{marcuzzo2022recommendation}, where a dense embedding table that hosts all possible features' embeddings (i.e., vector representations) is trained. 
Given the enormous amount of sparse categorical features, embedding tables have become the storage bottleneck for modern CRSs, which contradicts the surging demand for scalability and efficiency during on-device deployment \cite{yin2024device,memcom2020}. 
For illustration, in click-through rate (CTR) prediction, a typical CRS task that predicts users' clicking intention on an item, the RS deployed by Meta has been reported to consume 12T parameters and can demand up to 96TB of memory and multiple GPUs to train \cite{metarecsys2022}. 
Consequently, lowering the parameter footprint has attracted immense attention in both industry applications \cite{memcom2020,nis2019} and academic research \cite{bet2024,han2021deeprec}.
For instance, PEP \cite{pep2021} prunes the embedding table by searching a pruning mask through a learnable threshold. After the initial training of both the threshold and the model, the model is retrained from scratch to work under the pruned setting.
Another example is AutoSrh \cite{kong2022autosrh}, which adaptively groups features into blocks. After bi-level optimization, a sparse embedding parameterization can be selected according to the memory budget and then optimized via retraining. 

Generally, these methods focus on pruning the embedding table to a fixed size, hence an expensive training cycle is required for every single target sparsity. 
On one side, in most existing solutions, optimal trade-offs between recommendation accuracy and parameter efficiency require tailoring models according to each device's specific computational and storage capacities.
However, on the flip side of the coin, such expensive, repetitive training procedures are not affordable in practical settings with varying on-device storage needs. We provide two scenarios below:
\begin{itemize}[leftmargin=*]
    \item \textbf{Scenario 1:} Federated recommendation ensures privacy by assigning each user an on-device recommender. Such paradigms commonly suffer from hardware heterogeneity, where the variation of computing resources across devices (e.g., TV boxes, smart watches, and mobile phones) \cite{rule2021,yin2024device,yin2024device_tutorial} prohibits a uniform model architecture from being used. This naturally calls for more efficient means of embedding pruning to provide multiple versions of a trained model.
    \item \textbf{Scenario 2:} When deployed on-device, a model needs to operate under different computational resource budgets due to software parallelization and varying battery status \cite{yu2019slimmable,cai2020once}, and CRSs are no exception. Considering the time-sensitive nature of recommendation services, a sophisticated embedding pruning mechanism that requires reinforcement learning \cite{yunke_continous2023,bet2024} or even post-pruning retraining \cite{pep2021} will lead to a
    significant overhead, hindering the usability of the service pipeline. 
\end{itemize}
Ideally, considering those real-life scenarios, a competent pruning technique is expected to quickly adapt a well-trained CRS model to any specified parameter budget, without undergoing iterative search or retraining that are both time-consuming. 


Therefore, this paper focuses on a more practical, yet largely neglected setting of CRS compression for on-device deployment, which is to prune the embedding table of a recommender in one single shot\footnote{To clarify, different from transfer learning, our single-shot setting refers to performing model pruning in only one forward (and sometimes backward) pass \cite{lee2018snip,SSEDS}.}. 
To achieve single-shot pruning, a common practice is based on attribution scores \cite{taylor2017,ancona2020shapley}. In a nutshell, by quantifying the contribution from each individual model weight (i.e., embedding parameter in our case) to a performance metric, parameters with the lowest scores can be nullified to meet an arbitrary size constraint. In content-based recommendation, existing solutions \cite{SSEDS,wang2023single} perform parameter attribution by measuring the loss value change after removing the parameter from the full embedding table, which can be approximated with first-order gradients \cite{lee2018snip}. If we treat all embedding parameters as cooperative ``players'' in a game, then this is essentially asking the question: \textit{In a team formed by all players, what is each individual player's contribution to the whole team?} 
However, this formulation hardly aligns with the pruning context, where a player (i.e., embedding parameter), if not pruned, will actually form a smaller team with only a subset of players. In a single-shot pruning setting, instead of retaining embedding parameters that are most important to the full parameter set, we should ultimately identify parameters that can be of high value to any parameter subsets. So, the right question to be asked is: \textit{What is the expected contribution of each individual player, when it joins any team formed by a subset of players?} Due to the intertwined dependency among parameters, it is clear that this question does not share the same answer with the former one. 

To answer this question, we aim to investigate embedding pruning for CRSs via the lens of cooperative games, where we employ Shapley value \cite{roth1988shapley} to perform embedding parameter attribution. Shapley value is a well-defined method in the cooperative game theory for fairly allocating the total game revenue to players based on their individual contributions \cite{shap2017,sage2020,jethani2021fastshap}. It quantifies each player's contribution by considering all possible team combinations, resulting in the expected contribution of the player in any group. 
As such, compared with existing practices \cite{lee2018snip,SSEDS}, Shapley value is a better fit for generating unbiased attribution scores of embedding parameters to facilitate fast recommendation model pruning. 


However, some major challenges are yet to be resolved before we can enjoy the benefit of Shapley values. Firstly, the computational costs of Shapley value are prohibitively expensive. In our context, when a CRS only activates a subset of embedding parameters, each parameter's contribution in this subset can be measured by comparing the CRS's recommendation performance with and without it. Then, a parameter's Shapley value is its average contribution to all possible subsets of parameters. To calculate all the exact Shapley values, we need to enumerate over all $2^{nd}$ parameter subsets, where $n$ and $d$ are respectively the numbers of features and embedding dimensions. Considering that in CRSs $nd$ is intractable (e.g., reaching 100 million in our biggest dataset), and reliably measuring the recommendation performance requires a sufficiently large dataset, the number of resulted forward passes has rendered the straightforward calculation of Shapley values an infeasible option. 
Secondly, most pruning methods employ a simple zero-out approach to remove less important parameters and create a sparse embedding matrix.
While this reduces the model size, it can result in significant information loss as recommenders rely heavily on dense feature representations for accurate predictions.
This further deteriorates in CRSs due to their extensive use of dot product \cite{deepfm2017,fm2010} and element-wise multiplication \cite{dcnv2} between embeddings to model feature interactions. By setting parameters to zero, pruning can disrupt the complex relationships between features, leading to suboptimal performance. 

Motivated by these challenges, we propose \textbf{Sha}pley \textbf{V}alue-guided \textbf{E}mbedding \textbf{R}eduction (\textbf{Shaver}), a novel embedding pruning method for compressing a CRS model to any target size in one shot. Shaver assigns each embedding parameter a Shapley value as its attribution score, such that a well-trained model can prune its embeddings to any parameter budget in a single shot. To address the high computational costs of obtaining exact Shapley values, through a series of theoretical analyses, we develop an unbiased approximation method to reliably estimate the Shapley values of all embedding parameters. 
Efficiency-wise, Shaver reduces the computational complexity from $O((nd)! \times \len{\mathcal{D}})$ to the much lower $O(md \times \len{\mathcal{D}})$, with $\mathcal{D}$ and $m$ respectively being the dataset and number of feature fields (e.g., user gender and movie genre). Notably, $m\ll n$ in our case -- tens of feature fields versus millions of features (see Table \ref{tab:dataset-stat}).
To reduce the information loss caused by the default zero-padding on pruned parameters, we propose an alternative that searches for an optimal placeholder value per feature field, termed the codebook, which is then used for padding nullified embedding parameters during inference. In addition, the codebook values can be solved in a closed form, making it better at retaining the recommendation performance with a negligible computational overhead. 

In summary, our main contributions are:
\begin{itemize}
    \item We revisit one important but largely neglected task of single-shot embedding pruning, which is more relevant in the real-world, on-device deployment of CRSs. 
    \item We propose Shaver, a novel embedding compression method that reduces the time complexity of Shapley value estimation for CTR models and provides a more effective alternative to the widely used zero-out approach in model pruning with a field-aware codebook.
    \item We conduct extensive experiments on three datasets to compare Shaver with state-of-the-art embedding compression methods. The results show that even without an extra fine-tuning step, Shaver can achieve competitive performance.
\end{itemize}

\section{Related Work}
\subsection{CTR Prediction}

CTR prediction plays a vital role in the development of online advertisements. 
One of the first and most influential backbones for CTR prediction is Factorization Machine (FM) \cite{fm2010}, where the interaction between features are explicitly model upto 2nd order.
Recently, with the rapid development of deep neural networks (DNN), various deep models have also been proposed.
NeuMF \cite{neumf2017} combines the introduced Generalized Matrix Factorization (GMF) with a DNN network branch.
DeepFM \cite{deepfm2017} is another prime example, which adds a DNN branch into the FM model. 
Unlike DeepFM, which only attempts to model up to 2nd-order interaction, DCNv2 \cite{dcnv2} explicitly models the $l + 1$-th order interaction by $l$ crossnet layers. DCNv2 also incorporates a DNN to model more complex interactions. 
To further reduce the computation cost of DCNv2, Wang et al. \cite{dcnv2} also propose integrating Mixture-of-Experts for modeling the cross-layers, creating DCN--Mix.

\subsection{Lightweight Recommender Systems}
A diverse set of research endeavors has proposed various methods for learning light-weight embeddings \cite{zhang2023experimental, tran2024thorough, li_survey2024}. Compositional embedding methods decompose the original embedding table into multiple smaller meta-embedding tables and share parameters between features \cite{qr2019,ttrec2021,dhe2021,memcom2020,legcf2024,coleman2023unified,cel2023,xia2023towards}. 
Quantization approaches are also applied in light-weight RSs \cite{alpt2023,xu2021agile,yang2020mixed,guan2019post}, which focus on reducing the loss of model quality in lower precision representation. 
However, these approaches generally suffer from limited compression ratios \cite{zhang2023experimental}.
Knowledge distillation \cite{xia2022device,xia2023efficient,wang2020next} is employed to mitigate performance degradation in smaller models.
Another line of research focuses on finding the best parameter allocation through Neural Architecture Search (NAS), which usually involves reinforcement learning \cite{nis2019, yunke_continous2023,qu2024scalable}, bi-level optimization \cite{autoemb2021,autodim2021,kong2022autosrh,zheng2024personalized}, or evolutionary search \cite{rule2021,optembed2022}. Consequently, NAS-based approaches generally have high computation costs. 
Pruning is another popular approach to reduce the amount of embedding parameters by finding a binary mask \cite{pep2021, SSEDS,wang2024dynamic,amtl2021,qu2024sparser}. 
Recently, many hybrid methods \cite{cerp2023,optembed2022} have also been proposed to combine the benefit of various archetypes to achieve the best trade-off between performance and parameter budget.

Our method falls within the pruning category. 
Unlike most of other pruning methods \cite{pep2021,amtl2021}, we eliminate the hassle of simultaneously optimizing both the pruning budget mask and model performance. 
Compared to \cite{qu2024scalable,rule2021}, which also can adapt to arbitrary parameter budgets but focus on collaborative filtering, our method tailored towards CRSs. 
Additionally, unlike SSEDS \cite{SSEDS}, which mandates uniform dimensions across all features within the same field, our method offers greater flexibility by allowing varied budget allocations without such constraints.
Moreover,  our single-shot pruning setting focuses on efficiently customizing a trained CRS to heterogeneous memory budgets, which is different from online pruning settings \cite{lai2023adaembed,liu2025cafe,zhang2024cafe} where data arrives continually.

\subsection{Shapley Value for Machine Learning}
Shapley values have been widely adopted in the explainable AI community due to its high correlation with human intuition \cite{shap2017}. Because of its high computation cost, various researches have been proposed to tackle this challenge through 
random order sampling \cite{montecarloShapley,sage2020}, 
solving weighted least square problem \cite{shap2017,covert2020improving},
being model specific \cite{shap2017,treeshap2020},
and recently amortization approaches \cite{jethani2021fastshap,covert2024stochastic}. Recently, the success of Shapley values in explainable AI has inspired its application in pruning deep models. 
To start with, Ancona et al. \cite{ancona2020shapley} pointed out several desirable properties of Shapley values in pruning neurons:
\begin{itemize}
    \item Null player: If the game (loss) doesn't depend on a player (parameter or group of parameters), its attribution is zero.
    \item Symmetry: If the game depends on two players equally, then the two players should receive the same attribution.
    \item Efficiency: Attributions sum up to the difference between the loss evaluated when all players are enabled and the loss evaluated when all players are removed.
\end{itemize}
Later, NeuronShapley \cite{ghorbani2020neuron} considered Shapley value estimation as a multi-arm bandit problem, and applied truncated Monte Carlo to estimate Shapley values. 
Leveraging the multi-arm bandit settings, Guan et al. \cite{guan2022few} employ an $\epsilon$-greedy approach.
Troupe \cite{troupe2021} proposes using Shapley values to select models for an ensemble based on a specific data point.
Last, although it is not applied for compressing embedding tables, Shapley values also have been applied in recommender system designs \cite{ben2018game,fang2023autoshape}.

\section{Preliminaries}
In this section, we first introduce the key concepts of content-based recommendation by defining a classic task, namely click-through rate (CTR) prediction. Then, we provide the definition of Shapley value, the theoretical foundation of our solution.

\subsection{Content-based Recommendation}
In CTR prediction, a dataset $\mathcal{D}$ is a set of samples $(\mathbf{x}, y)$, where $\mathbf{x}$ is a multi-dimensional vector combining both user and item features, and $y$ is a binary label indicating positive ($y=1$, clicked) or negative ($y=0$, not clicked) user interaction with the item. For $\mathbf{x}$, its features are commonly a collection of sparse binary features from multiple fields (i.e., user occupation and product category) \cite{marcuzzo2022recommendation,bars2022}, such as:
\begin{equation}
    \mathbf{x} = 
    \underbrace{[1, 0]}_{\mathbf{x}_1( \text{field 1})} \underbrace{[0,\dots,1,\dots,0]}_{\mathbf{x}_2(\text{field 2})}\,\,{\dots}\,\, \underbrace{[0,\dots,1,\dots,0]}_{\mathbf{x}_m(\text{field}\;m)},
\end{equation}
where an example of $\mathbf{x}_1$ and $\mathbf{x}_2$ can be $user\, gender=female$ in field 1 and $movie\, genre=action$ in field 2. 
$\mathbf{x}$ is then a concatenation of all categorical feature encodings $\{\mathbf{x}_j\}_{j=1}^m$ from all $m$ fields, where numerical features are commonly discretized via bucketing \cite{bars2022} in CTR prediction tasks. 

In real-world applications, the input vector $\mathbf{x}$ is extremely sparse, therefore each field $j$'s feature is mapped into a $d$-dimensional, dense latent vector $\mathbf{e}_j \in \mathbb{R}^d$, called embedding:
\begin{equation}
    \mathbf{e}_j = \mathbf{V}_j^{\top}{\mathbf{x}_j},
\end{equation}
where $\mathbf{V}_j \in \mathbb{R}^{n_j \times d}$ and $\mathbf{x}_j\in \mathbb{R}^{n_j}$ are respectively the embedding table and sparse encoding for field $j$, with $n_j$ the number of features in field $j$. In practice, all $m$ field-specific embedding tables $\mathbf{V}_i$ are concatenated as a single embedding table $\mathbf{E} \in R^{n \times d}$ for storage, where $\mathbf{E}= \left[ \mathbf{V}_1, \mathbf{V}_2, \dots, \mathbf{V}_m \right]$, and $n = \sum_{j=1}^m n_j$ is the total number of features. 
After obtaining embeddings for $\mathbf{x}$ from all fields $\{ \mathbf{e}_1, \mathbf{e}_2, \dots, \mathbf{e}_m \}$, they are further processed by a recommendation backbone model:
\begin{equation}
    \hat{y} = f(\{ \mathbf{e}_1, \mathbf{e}_2, \dots, \mathbf{e}_m \}),
\end{equation}
with $\hat{y}$ is a binary classification prediction whether a user will click on an item or not.
Its primary target is to model the complex interactions between different features. For instance, the model can combine the user gender and the movie genre to create a 2nd-order interaction, enhancing model performance \cite{fm2010, deepfm2017}.
These high order interactions are commonly modeled explicitly by dot product \cite{fm2010, deepfm2017} or element-wise multiplications \cite{dcnv2}, and implicitly by deep neural networks \cite{dcnv2,deepfm2017}. 
Finally, $f(\cdot)$ is trained with log loss:
\begin{equation}
    L = -y \ln (\hat{y}) - (1 - y) \ln (1 - \hat{y}),
\end{equation}
which quantifies the prediction error between $y$ and $\hat{y}$.

\begin{figure*}[t!]
    \centering
    \includegraphics[width=0.7\linewidth]{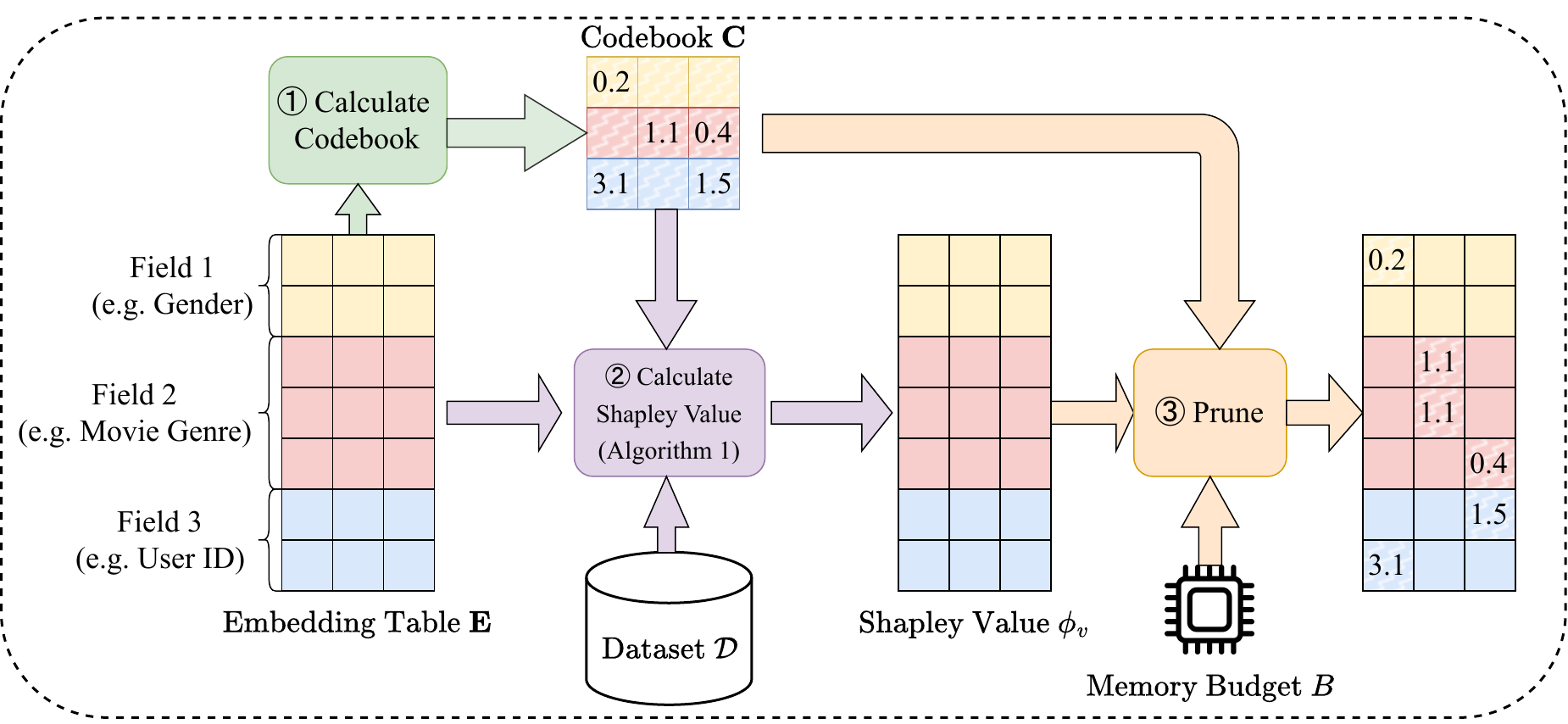}
    \vspace{-.3cm}
    \caption{The overview of Shaver. 
    \textmd{We calculate placeholder values (codebook $\mathbf{C}$), and then compute Shapley values $\phi_v$ from the provided dataset $\mathcal{D}$. On any required memory budget $B$, we replace embedding parameters with the lowest attribution scores by placeholder values.}
    }
    \label{fig:overview}
    \vspace{-.3cm}
\end{figure*}

\subsection{Shapley Value}\label{sec:SV_def}
Shapley value originates from the cooperative game theory, whose aim is to assign each player a score that quantifies the player's contribution in a cooperative game. Formally, a cooperative game consists of a set of players denoted by $\mathcal{N}$, where any arbitrary subset of players can form a ``team''. The coalitional values generated by player subsets are measured by function $v: 2^{\mathcal{N}} \rightarrow \mathbb{R}$, with $v(\emptyset)=0$. 
For player $i\in\mathcal{N}$, its Shapley value ${\phi}_v(i)$ w.r.t. function $v(\cdot)$ is:
\begin{equation}
    {\phi}_v(i) = \frac{1}{|\mathcal{N}|} \sum_{ \mathcal{S} \subseteq \mathcal{N} \setminus i } \binom{|\mathcal{N}| -1}{|\mathcal{S}|}^{-1} \Big{(}v(\mathcal{S} \cup i) - v(\mathcal{S})\Big{)},
\end{equation}
where $\binom{\cdot}{\cdot}$ is the binomial coefficient, and $v(\mathcal{S} \cup i) - v(\mathcal{S})$ quantifies the marginal contribution of $i$ after joining team $\mathcal{S}$. In a nutshell, ${\phi}_v(i)$ is computed by enumerating over all possible teams $\mathcal{S}\subseteq \mathcal{N} \setminus i$. 
With that, a commonly used, permutation-based formulation of the Shapley value is:
\begin{equation}
    \phi_v(i) = \frac{1}{\len{\mathcal{N}}!} \sum_{\forall \mathcal{R}} \left( v ( \mathcal{P}^{\mathcal{R}}_i \cup i ) - v ( \mathcal{P}^{\mathcal{R}}_i ) \right),
    \label{eq:shapley-monte}
\end{equation}
where the ordered set $\mathcal{R}$ is a possible permutation of all players, $\mathcal{P}^{\mathcal{R}}_i$ is the set of players that precede $i$ in $\mathcal{R}$. Thus, the Shapley value $\phi_v(i)$ can be calculated as the mean/expectation of $i$'s contributions in all $\len{\mathcal{N}}!$ possible permutations of $\mathcal{N}$.

\textbf{Monte Carlo Approximation of Shapley Values.} Based on Eq.(\ref{eq:shapley-monte}), the default form of Shapley value calculation incurs a factorial time complexity w.r.t. the number of players. To scale it to a large $\mathcal{N}$, approximating the Shapley value via Mote Carlo approach is a widely adopted workaround \cite{ghorbani2020neuron,ancona2020shapley}, which estimates 
$\phi_v(i)$ by 
sampling $p$ permutations $\mathcal{R}_k$ and take the average marginal contribution in these samples to approximate Shapley value:
\begin{equation}
    {\phi}_v(i) \approx \frac{1}{p} \sum_{k=1}^{p} \left( v ( \mathcal{P}^{\mathcal{R}_k}_i \cup  i ) - v ( \mathcal{P}^{\mathcal{R}_k}_i ) \right),
\end{equation}
where ${\phi}_v(i)$ is now the Monte Carlo approximation \cite{montecarloShapley} of  player $i$'s Shapley value w.r.t. value function $v(\cdot)$.

\section{Shaver: The Proposed Method}
\newcommand{\featind}[0]{i}
\newcommand{\fieldind}[0]{j}
\newcommand{\columnind}[0]{c}

In this section, we unfold the design of Shaver, namely \underline{Sha}pley \underline{V}alue-guided \underline{E}mbedding \underline{R}eduction for on-device CRSs, of which an overview is provided in Figure \ref{fig:overview}. We firstly formally define the problem, and subsequently provide corresponding theoretical analyses and pseudocode to efficiently estimate the Shapley value of each embedding parameter. 
Finally, we derive a codebook to replace the traditional zero-padding strategy, so as to compensate for the information loss brought by embedding parameter pruning and maintain maximum model performance. 




\subsection{Shapley Values of Embedding Parameters}

Before defining the problem, we provide some clarifications for Shapley value in our specific task context. To facilitate selective embedding table pruning, we derive a Shapley value for every single parameter in the feature embedding table $\mathbf{E}$. Let $\mathbf{E}[\featind,\columnind]$ denote the parameter at the $\featind$-th row and $\columnind$-th column (i.e., the $\columnind$-th embedding dimension of feature $\featind$), the Shapley value of player $(\featind,\columnind)$ is denoted as $\phi_v(\featind,\columnind)$ with a value function $v(\cdot)$. Please note that the use case of Shapley value in our work differs from the mainstream, data-centric counterparts \cite{shap2017,jethani2021fastshap} that focus on calculating values of individual data points or raw features, while we are specifically interested in the Shapley values of model parameters.



\textbf{Value Function $v(\cdot)$.} 
For measuring $\phi_v(\featind,\columnind)$, we need to define a value function $v(\cdot)$ that can quantify the contribution from a group of players (i.e., parameters) $\mathcal{S}=\{\mathbf{E}[\featind_1,\columnind_1], \mathbf{E}[\featind_2,\columnind_2],\dots,\mathbf{E}[\featind_s,\columnind_s]\}$. 
Given a CTR model $f(\cdot)$, we define $v(\cdot)$ as the log loss increase when parameters $\mathcal{S}$ become absent from $\mathbf{E}$:
\begin{equation}\label{eq:value_f}
    v(\mathcal{S}) = \frac{1} { \len{\mathcal{D}} } \sum_{(\mathbf{x}, y )\in \mathcal{D} }  \Big{(}  L(f_{\text{w/o-}\mathcal{S}}(\mathbf{x}), y) - L(f(\mathbf{x}), y) \Big{)},
\end{equation}
where $f_{\text{w/o-}\mathcal{S}}(\cdot)$ denotes the CTR model without embedding parameters $\mathcal{S}$, $L$ is the log loss, and $\mathcal{D}$ denotes the dataset.  
With this definition, the Shapley value $\phi_v(\featind,\columnind)$ of a single parameter $\mathbf{E}[\featind,\columnind]$ can be quantified via an analogous process as in Section \ref{sec:SV_def}.

\begin{restatable}[Single-shot Pruning with Shapley Values]{definition}{localeq}
    \label{definition:task}
    Given a trained CTR model $f(\cdot)$, our goal is to obtain the Shapley value of every single parameter $\phi_v(\featind,\columnind)$ in the feature embedding table. Based on $\phi_v(\featind,\columnind)$, we can efficiently prune a CTR model for any on-device budget by directly removing the lowest ranked entries from its embedding table to meet the specified parameter size.
\end{restatable}



\subsection{Local and Global Values}  
If straightforward Monte Carlo approximation of ${\phi}_{v}(\featind,\columnind)$ is in use, for all $\len{\mathcal{N}} = nd$ embedding parameters, $\len{\mathcal{N}} \len{\mathcal{D}}= n d  \len{\mathcal{D}}$ forward passes are needed per permutation, leading to a prohibitive computational cost of $\mathcal{O}(pnd \len{\mathcal{D}})$. 
One main cause is that Eq.(\ref{eq:value_f}) lays emphasis on embedding parameters' values to the overall data distribution. We term this \textit{global values}, whose calculation involves a sufficiently large, if not the full dataset $\mathcal{D}$. With that said, the function of \textit{local values} $v^{loc}(\cdot)$ is defined based on single data samples:
\begin{equation}
    v^{loc}(\mathcal{S}|\mathbf{x},y) =  L(f_{\text{w/o-}\mathcal{S}}(\mathbf{x}), y) - L(f(\mathbf{x}), y).
\end{equation}
In contrast to computing the global values, there are some efforts that draw an individual data sample at a time, pair it with one random permutation, and then derive each parameter's local value to approximate the global one~\cite{sage2020}. This process can be described as follows: 
\begin{equation}
\begin{split}
 \phi_v(i,c) &= \meanE_{\mathcal{R}} \left[v (\mathcal{P}^\mathcal{R}_{i,c} \cup  (i,c)) - v ( \mathcal{P}^\mathcal{R}_{i,c} ) \right] 
        \\
        &= \meanE_{\mathcal{R}} \meanE_{(\mathbf{x}, y)} \left[ v^{loc}(\mathcal{P}^\mathcal{R}_{i,c} \cup  (i,c)|\mathbf{x},y) - v^{loc}(\mathcal{P}^\mathcal{R}_{i,c}|\mathbf{x},y) \right] \\
        &= \meanE_{(\mathbf{x}, y),\mathcal{R}} \left[ L(f_{\text{w/o-}\mathcal{S}_1}(\mathbf{x}), y) - L(f_{\text{w/o-}\mathcal{S}_2}(\mathbf{x}), y) \right]\\
        &= \frac{1}{|\mathcal{D}|}\!\! \sum_{(\mathbf{x}, y)\in \mathcal{D},\mathcal{R}_{(\mathbf{x}, y)}}\!\!\! \Big( L(f_{\text{w/o-}\mathcal{S}_1}(\mathbf{x}), y) - L(f_{\text{w/o-}\mathcal{S}_2}(\mathbf{x}), y) \Big),\\
    \end{split}
    \label{eq:sage-eq}
\end{equation}
with $\mathcal{S}_1$ and $\mathcal{S}_2$ defined as: 
\begin{equation}
\begin{split}
    \mathcal{S}_1 &=\{\mathbf{E}[i',c']\,|\,(i',c')\in \mathcal{P}^\mathcal{R}_{i,c} \cup  (i,c)\},\\ 
        \mathcal{S}_2 &=\{\mathbf{E}[i',c']\,|\,(i',c')\in \mathcal{P}^\mathcal{R}_{i,c}\},\\
\end{split}
\end{equation}
where $\mathcal{R}_{(\mathbf{x}, y)}$ denotes one random permutation paired with data sample $(\mathbf{x}, y)$. , Eq.(\ref{eq:sage-eq}) shows that, with a quality, decent-sized $\mathcal{D}$, the global Shapley value can be confidently approximated as the mean of local values \cite{sage2020}. 
For all parameters, this method runs in $\mathcal{O}(nd|\mathcal{D}|)$. Despite being $p$ times faster than vanilla Monte Carlo and a subsampled $\mathcal{D}$ from the full training dataset can help further lower the complexity, the large feature size $n$ (normally in million-level) remains the efficiency bottleneck for parameter attribution. 
The following section shows how we further reduce those $nd|\mathcal{D}|$ forward passes -- which is essentially iterating over the dataset for $nd$ times -- to a much smaller number..
\subsection{Efficient Shapley Value Computation for CRS Embeddings}

Recall that in CRSs, features are categorized into $m$ fields. On top of the above defined local value $v^{loc}(\cdot)$, we further define a local value function $u^{loc}(\cdot)$ specific to feature fields:
\begin{equation}
    u^{loc}(\mathcal{S}'|\mathbf{x},y) = L \left( f_{\text{w/o-}\mathcal{S}'}(\mathbf{x}), y \right) - L\left( f(\mathbf{x}), y \right).
\end{equation}
Distinct from $v^{loc}(\cdot)$, in $u^{loc}(\cdot)$, a player $(\fieldind,c)$ is \textit{a group of parameters} that are located at the same column $c$ in field $\fieldind$'s embedding table $\mathbf{V}_j$.
To avoid cluttered notations, in this section, if not specified, both $v(\cdot)$ and $u(\cdot)$ are the \textit{local value functions} calculated with a single instance $(\mathbf{x},y)$.
Then, we define our notion of corresponding player as follows.
\begin{restatable}[Corresponding Player]{definition}{localeq}
    \label{definition:corresponding}
    Given player $(\featind,\columnind)$ in $v(\cdot)$ and player $(j,c')$ in $u(\cdot)$, and a data instance $(\mathbf{x},y)$, we call $(\featind,\columnind)$ the \textit{corresponding player} of $(\fieldind,c')$ if $c=c'$ and the $i$-th feature is activated, i.e., $\mathbf{x}[i]=1$.
\end{restatable}
With the above definition, we provide a key theorem for efficient Shapely value computation below:
\begin{restatable}[]{theorem}{localeq}
    \label{theorem:local-eq}
    Assume player $(i, c)$ in $v(\cdot)$ is corresponding with player $(j, c)$ in $u(\cdot)$ w.r.t. data instance $(\mathbf{x},y)$. Then, denote their Shapley values based on local value functions respectively as $\phi_v^{\mathbf{x},y}(i,c)$ and $\phi_u^{\mathbf{x},y}(j,c)$, we have $\phi_v^{\mathbf{x},y}(i,c) = \phi_u^{\mathbf{x},y}(j,c)$.
\end{restatable}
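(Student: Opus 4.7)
The plan is to exploit the fact that for any fixed data instance $(\mathbf{x},y)$, the overwhelming majority of embedding parameters are null players in the local game defined by $v(\cdot|\mathbf{x},y)$, so the huge game over $nd$ parameters collapses to an isomorphic copy of the $u$-game on only $md$ players. Concretely, since each $\mathbf{x}_j$ is one-hot, $\mathbf{x}$ activates exactly one feature per field, and only the rows $\mathbf{E}[i',\cdot]$ with $\mathbf{x}[i']=1$ are actually read during the forward pass $f(\mathbf{x})$. Let $\mathcal{M} = \{(i',c'):\mathbf{x}[i']=1,\; 1\le c'\le d\}$ denote these \emph{active} parameters; for any $(i',c')\notin\mathcal{M}$ and any coalition $\mathcal{S}$, the computation of $f_{\text{w/o-}\mathcal{S}}(\mathbf{x})$ is unaffected by including or excluding $\mathbf{E}[i',c']$, so $v^{loc}(\mathcal{S}\cup\{(i',c')\}|\mathbf{x},y)=v^{loc}(\mathcal{S}|\mathbf{x},y)$, identifying $(i',c')$ as a null player.

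The first step is to push this observation through the Shapley definition to show that inactive parameters can be discarded without altering $\phi_v^{\mathbf{x},y}(i,c)$ for $(i,c)\in\mathcal{M}$. I would use the permutation formula in Eq.~(\ref{eq:shapley-monte}): a uniformly random permutation of all $nd$ players induces a uniform permutation of $\mathcal{M}$ when restricted to $\mathcal{M}$, and because null players contribute zero marginal value, the marginal contribution of $(i,c)$ depends only on $\mathcal{P}^{\mathcal{R}}_{i,c}\cap \mathcal{M}$. Averaging over $\mathcal{R}$ therefore yields $\phi_v^{\mathbf{x},y}(i,c)=\phi_{v|\mathcal{M}}^{\mathbf{x},y}(i,c)$, i.e., the Shapley value computed in the subgame restricted to $\mathcal{M}$ under the same local value function.

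The second step is to exhibit an isomorphism between this restricted $v$-game and the $u$-game. Writing $i_{j'}$ for the unique active feature in field $j'$ under $\mathbf{x}$, define $\pi(i_{j'},c')=(j',c')$; this is a bijection from $\mathcal{M}$ onto the player set of $u(\cdot|\mathbf{x},y)$. For any $\mathcal{S}\subseteq\mathcal{M}$, zeroing the entries in $\mathcal{S}$ produces the same modified prediction $f_{\text{w/o-}\mathcal{S}}(\mathbf{x})$ as zeroing the field-column groups $\pi(\mathcal{S})$, because at each column $c'$ inside field $j'$ the only entry that enters $f(\mathbf{x})$ is row $i_{j'}$ and all such rows already lie in $\mathcal{S}$ whenever $\pi(i_{j'},c')\in\pi(\mathcal{S})$. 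Hence $v^{loc}(\mathcal{S}|\mathbf{x},y)=u^{loc}(\pi(\mathcal{S})|\mathbf{x},y)$, so the two games share a common value function up to relabelling by $\pi$. Shapley values are invariant under such relabellings, which gives $\phi_v^{\mathbf{x},y}(i,c)=\phi_u^{\mathbf{x},y}(j,c)$ as claimed.

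The main obstacle is the first step: rigorously showing that null players can be purged without altering the remaining Shapley values. Although standard, it does require a short combinatorial argument -- either the Vandermonde-style rearrangement of the closed-form Shapley coefficients, or, as I prefer, the permutation-based observation above, which cleanly isolates the fact that null players cancel inside every marginal increment. Once this reduction is in place, the second step is essentially bookkeeping: the ``one active feature per field'' structure of $\mathbf{x}$ makes the field-level game $u$ and the restricted parameter-level game $v|_{\mathcal{M}}$ literal relabellings of one another for this instance.
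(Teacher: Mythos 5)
Your proof is correct, but it takes a genuinely different route from the paper's. The paper works directly with the closed-form, subset-sum Shapley formula: it defines a map $g$ from coalitions of $v$-players to coalitions of $u$-players, observes that marginal contributions agree under $g$, counts how many size-$k$ coalitions $\mathcal{S}_v$ map onto a fixed $\mathcal{S}_u$ (namely $\binom{nd-md}{k-\len{\mathcal{S}_u}}$), and then collapses the resulting double sum of binomial coefficients via a rearrangement and the Chu--Vandermonde identity. You instead factor the argument through two standard structural facts: (i) the $nd-md$ inactive parameters are null players of the local game -- they leave every coalition's value unchanged because they never enter the forward pass on $\mathbf{x}$ -- and removing null players preserves the Shapley values of the remaining players, which you justify cleanly with the permutation formulation since a uniform permutation of all players restricts to a uniform permutation of the active set $\mathcal{M}$ and each marginal increment only sees $\mathcal{P}^{\mathcal{R}}_{i,c}\cap\mathcal{M}$; and (ii) the restricted game on the $md$ active parameters is a relabelling of the field-level game $u$, because each field activates exactly one feature, so zeroing a whole field-column group alters $f(\mathbf{x})$ exactly as zeroing the single active entry does, and Shapley values are invariant under relabelling. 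Your decomposition is more conceptual, avoids the binomial manipulations entirely, and makes the mechanism transparent: the theorem is really ``null-player removal plus game isomorphism.'' The paper's calculation is more self-contained in that it does not invoke the null-player-removal lemma as a separate ingredient, but it effectively re-proves that lemma inside the Chu--Vandermonde computation. Both arguments rest on the same two observations -- only activated rows enter $f(\mathbf{x})$, and marginal contributions match under the correspondence -- so the difference is one of packaging rather than substance.
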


Our proof is provided in Appendix \ref{sec:proofs}. In short, we observe that $\phi_v^{\mathbf{x},y}(i, c)$ is equivalent to $\phi_u^{\mathbf{x},y}(j,c)$, as long as $(i,c)$ is the corresponding player of $(j,c)$. 
As computing $\phi_u^{\mathbf{x},y}(j,c)$ will suffice, only $md$ rather than $nd$ iterations over $\mathcal{D}$ are now required to compute $\phi_u^{\mathbf{x},y}(j,c)$ for all players in $u(\cdot)$. Taking our largest evaluation dataset KDD as an example (see Table \ref{tab:dataset-stat}), with $11$ ($m$) feature fields, over 6 million ($n$) features, and an embedding dimension of $d=16$, the $\mathcal{O}(md|\mathcal{D}|)$ complexity of our Shapley value computation over $\mathcal{O}(nd|\mathcal{D}|)$ given that $m\ll n$. The algorithm description and convergence analysis are both provided in Appendix \ref{sec:shaver_details}. 

\subsection{Replacing Zero-padding with Field-aware Codebook}
\label{sec:codebook}
In various machine learning fields like explainable AI, the choice of placeholder value (i.e. the value representing the missing value) is crucial, and can greatly affect the final performance \cite{chen2023algorithms_shapley}. 
However, most if not all CRS pruning algorithms opt for a simple zero-padding strategy. 
This approach often leads to suboptimal recommendation performance, as interactions between feature embeddings are modeled using dot product and element-wise operations. Setting certain parameters to zero eliminates such interactions with all other features, compromising system effectiveness. 
Therefore, in this work, we introduce a  approach to represent pruned features by a field-aware codebook $\mathbf{C} \in \mathbb{R}^{m \times d}$. As it only comes with $md$ elements, its storage overhead is negligible. 
For a parameter budget, let $\mathbf{E}_{\mathcal{Q}}$ denote the sparsified embedding table that nullifies the least important parameters identified in set $\mathcal{Q}$, the following equation specifies how we impute the pruned embedding table into $\mathbf{E}_{\mathcal{Q},\mathbf{C}}$ with the codebook:
\begin{equation}
  \mathbf{E}_{\mathcal{Q},\mathbf{C}} \left[ {i,c} \right] = \begin{cases}
        \mathbf{E}_{\mathcal{Q}}  \left[ {i,c} \right], \text{ if } \{i, c\} \notin \mathcal{Q} \\
        \mathbf{C} [ {j,c}], \text{ else}
    \end{cases},
\end{equation}
with $j$ the corresponding field of feature $i$. 
After calculating the attribution score, we will replace pruned parameters based on the above rule instead of only zeroing out.

To better preserve the model performance, we will choose codebook $\mathbf{C^{*}}$ that minimizes the expected Euclidean distance between the original embedding table and its imputed version:
\begin{equation}
    \mathbf{C}^{*} = \min_{\mathbf{C}} \meanE_{\mathbf{x} \sim \mathcal{D}, \len{\mathcal{{Q}}}=B} \left[ \lVert \mathbf{E}^{\top}\mathbf{x} - \mathbf{E}_{\mathcal{Q},\mathbf{C}}^{\top}\mathbf{x} \rVert_2^2 \right],
\end{equation}
where $B\in (0,nd)$ represents any possible number of pruned parameters, and $\mathcal{Q}$ is uniformly sampled such that $\len{{\mathcal{Q}}} = B$. With this formulation, the optimal codebook $\mathbf{C}^{*} \in \mathbb{R}^{m \times d}$ can be solved via the weighted average of all feature in each field:
\begin{equation}
    \mathbf{C}^{*}[j,:] = \frac{\sum_{i \in \mathcal{F}_j} p_i \mathbf{V}_j[i,:]}{\sum_{i \in \mathcal{F}_j} p_i},
\end{equation}
where $\mathcal{F}_j$ is the set of features in field $j$, and $p_i$ is the occurrence frequency of feature $i$.
The detailed deriviation is provided in Appendix \ref{sec:proofs}. This closed-form solution allows for efficient computation of codebook $\mathbf{C}^{*}$ to compensate for the information loss from the pruning stage. When deployed on-device, both the pruned sparse embedding table $\mathbf{E}_{\mathcal{Q}}$ and the codebook $\mathbf{C}^{*}$ are stored. 
When inference is needed, only the $m$ embeddings $\{\mathbf{e}_1, \mathbf{e}_2, ..., \mathbf{e}_m\}$ of the feature $\mathbf{x}$ will be imputed as the input to $f(\cdot)$ for on-device CTR prediction. 

\section{Experiments}
\renewcommand{\arraystretch}{0.8}

In this section, we conduct experiments to study the effectiveness of Shaver. Specifically, we are interested in answering the following research questions (RQs):
\begin{enumerate}
    \item [\textbf{RQ1}:] Compared with non-single-shot baselines, how does Shaver perform when compressing CRS models?
    \item [\textbf{RQ2}:] Compared with other single-shot baselines, how does Shaver perform under different parameter budgets? 
    \item [\textbf{RQ3}:] Most on-device CRS compression methods require model retraining/finetuning post pruning. Does Shaver also benefit from such a practice?
    \item [\textbf{RQ4}:] 
    What is the impact of dataset size to the recommendation performance of the pruned models? 
    \item [\textbf{RQ5}:] Does Shaver exhibit any patterns when pruning parameters? 
\end{enumerate}

\begin{table}[t!]
    \centering
    \caption{Statistics of the preprocessed datasets.}
    \vspace{-0.3cm}
    
    \begin{tabular}{lrrr}
        \toprule
        Name & \#Instances & \#Features & \#Fields \\
        \midrule
        Criteo & 45,840,617 & 1,086,810 & 39  \\
        Avazu & 40,428,967 & 4,428,511 & 22 \\
        KDD & 149,639,105 & 6,019,761 & 11 \\
        \bottomrule
    \end{tabular}
    \vspace{-0.3cm}
    \label{tab:dataset-stat}
\end{table}
\renewcommand{\arraystretch}{1.0}

\subsection{Experimental Settings}
\subsubsection{Datasets}

We conduct our experiments on three public datasets. In all datasets, we randomly split them into 8:1:1 as the training, validation and test set respectively. 
We specify the dataset preprocess procedure in Appendix \ref{sec:dataset-preprocess}. Table \ref{tab:dataset-stat} provides the core statistics of pre-processed datasets.

\subsubsection{Metrics}
We evaluate all models using two commonly used evaluation metric in CTR prediction community \cite{barsctr2021}: LogLoss and AUC (Area under the ROC curve). 
In the CTR problem, a difference of \textbf{0.001} in AUC is generally considered significant \cite{tran2024thorough,optembed2022}.
The lower LogLoss suggests better performance, while the higher AUC implies more accurate recommendations.
The sparsity rate $t$ indicates the portion of embedding parameters that have been removed compared to the original embedding table, defined as $t = \len{\mathcal{Q}}/{nd}$.

\subsubsection{Implementation Details}
All methods are tested with two well-known CTR backbones: DeepFM \cite{deepfm2017} and DCN--Mix \cite{dcnv2}. We implement two variants of Shaver, namely Shaver-Zero that uses the traditional zero-padding on pruned parameters, and Shaver-Codebook described in Section \ref{sec:codebook}.
Descriptions of all baselines are provided in Appendix \ref{sec:baselines}.
We adhere strictly to the experimental settings described in \cite{tran2024thorough} for reproducibility, adopting the same data splits and hyperparameter search ranges.
To ensure a fair comparison, we applied the same hyperparameters used in the training of the original model during our fine-tuning step. We used both the training and validation datasets to calculate the attribution scores. 
We select final checkpoints according to validation AUC.

\renewcommand{\arraystretch}{0.8}
\setlength\tabcolsep{3.5pt}
\begin{table*}[pth]
    \centering
        \caption{Comparative results with non-single-shot baselines under three sparsity rates. All methods adopt DCN-Mix as the CRS backbone, where we use $\uparrow$ and $\downarrow$ to respectively mark higher-is-better and lower-is-better metrics. For OptEmb, we did not report results on some settings where it fails to converge. ``\#Params'' indicates the parameter size of the embedding table (in millions). In each dataset, the best result is marked in bold and the second best one is underlined.}
    \vspace{-.3cm}
\begin{tabular}{llcccccccccc}
    \toprule
    \multirow[c]{2}{*}{Dataset} & \multirow[c]{2}{*}{Method} &Single- & \multicolumn{3}{c}{$t$ = 50\%} & \multicolumn{3}{c}{$t$ = 80\%} & \multicolumn{3}{c}{$t$ = 95\%} \\
    \cmidrule(lr){4-6}\cmidrule(lr){7-9}\cmidrule(lr){10-12}
     & & shot? & AUC$\uparrow$ & LogLoss$\downarrow$ & \#Params & AUC$\uparrow$ & LogLoss$\downarrow$ & \#Params & AUC$\uparrow$ & LogLoss$\downarrow$ & \#Params \\
    \midrule

\multirow[c]{8}{*}{Criteo} 
     & QR \cite{qr2019} & \xmark & 0.8095 & 0.4425 & 8.69M & 0.8091 & 0.4427 & 3.48M & 0.8064 & 0.4452 & 870K \\
     & TTRec \cite{ttrec2021} & \xmark & 0.8112 & \underline{0.4408} & 8.26M & 0.8103 & 0.4416 & 3.48M & 0.8106 & 0.4414 & 870K \\
     & PEP \cite{pep2021} & \xmark & \underline{0.8113} & \textbf{0.4407} & 8.69M & 0.8108 & 0.4413 & 3.46M & \textbf{0.8114} & \textbf{0.4406} & 831K \\
     & OptEmb \cite{optembed2022} & \xmark & \multicolumn{1}{c}{-} & \multicolumn{1}{c}{-} & \multicolumn{1}{c}{-} & \underline{0.8112} & \underline{0.4408} & 1.45M & 0.8109 & \underline{0.4408} & 856K \\
     & CERP \cite{cerp2023} & \xmark & 0.8110 & 0.4410 & 8.66M & 0.8106 & 0.4416 & 3.44M & 0.8107 & 0.4414 & 850K \\
     \cmidrule(lr){2-12}
     & Shaver-Zero & \cmark & \textbf{0.8114} & \textbf{0.4407} & 8.69M & \textbf{0.8114} & \textbf{0.4407} & 3.48M & 0.8110 & 0.4412 & 869K \\
      & Shaver-Codebook & \cmark & \textbf{0.8114} & \textbf{0.4407} & 8.69M & \textbf{0.8114} & \textbf{0.4407} & 3.48M & \underline{0.8112} & 0.4410 & 870K \\

\midrule

\multirow[c]{8}{*}{Avazu} 
   & QR \cite{qr2019} & \xmark & 0.7743 & 0.3857 & 35.43M & 0.7698 & 0.3922 & 14.17M & 0.7652 & 0.3895 & 3.54M \\
 & TTRec \cite{ttrec2021} & \xmark & 0.7711 & 0.3873 & 35.43M & 0.7710 & 0.3900 & 14.17M & 0.7678 & 0.3903 & 3.54M \\
 & PEP \cite{pep2021} & \xmark & 0.7754 & 0.3852 & 33.55M & 0.7669 & 0.3906 & 12.57M & 0.7564 & 0.3938 & 3.52M \\
 & OptEmb \cite{optembed2022} & \xmark & \multicolumn{1}{c}{-} & \multicolumn{1}{c}{-} & \multicolumn{1}{c}{-} & 0.7663 & 0.3877 & 13.24M & \underline{0.7683} & 0.3895 & 3.43M \\
 & CERP \cite{cerp2023} & \xmark & 0.7723 & 0.3852 & 35.34M & 0.7673 & 0.3873 & 13.05M & 0.7631 & 0.3933 & 3.54M \\
 
\cmidrule(lr){2-12}

 & Shaver-Zero & \cmark & \underline{0.7759} & \textbf{0.3837} & 35.43M & \underline{0.7745} & \textbf{0.3840} & 14.17M & \textbf{0.7690} & \textbf{0.3865} & 3.54M \\
  & Shaver-Codebook & \cmark & \textbf{0.7762} & \underline{0.3838} & 35.43M & \textbf{0.7749} & \underline{0.3842} & 14.17M & \underline{0.7683} & \underline{0.3873} & 3.54M \\

\midrule

\multirow[c]{8}{*}{KDD} 
& QR \cite{qr2019} & \xmark & 0.7724 & 0.1581 & 48.16M & 0.7709 & 0.1584 & 19.26M & 0.7702 & 0.1588 & 4.82M \\
 & TTRec \cite{ttrec2021} & \xmark & 0.7862 & 0.1576 & 48.16M & \underline{0.7843} & \underline{0.1564} & 19.26M & 0.7815 & \textbf{0.1559} & 4.82M \\
 & PEP \cite{pep2021} & \xmark & 0.7754 & 0.1576 & 45.93M & 0.7731 & 0.1581 & 18.88M & 0.7737 & 0.1579 & 4.50M \\
 & OptEmb \cite{optembed2022} & \xmark & \textbf{0.7877} & \textbf{0.1558} & 48.99M  
 & 0.7818 & 0.1593 & 15.10M & \underline{0.7826} & 0.1591 & 4.80M\\
 & CERP \cite{cerp2023} & \xmark & 0.7841 & 0.1566 & 47.50M & 0.7781 & 0.1573 & 19.18M & 0.7721 & 0.1589 & 4.68M \\
 
\cmidrule(lr){2-12}

 & Shaver-Zero & \cmark & 0.7858 & 0.1563 & 48.16M & 0.7826 & 0.1569 & 19.26M & 0.7699 & 0.1601 & 4.82M \\
  & Shaver-Codebook & \cmark & \underline{0.7870} & \underline{0.1559} & 48.16M & \textbf{0.7870} & \textbf{0.1557} & 19.26M & \textbf{0.7836} & \underline{0.1563} & 4.82M \\
\bottomrule
\end{tabular}
    \vspace{-.3cm}

\label{tab:RQ1}
\end{table*}
\renewcommand{\arraystretch}{1.0}

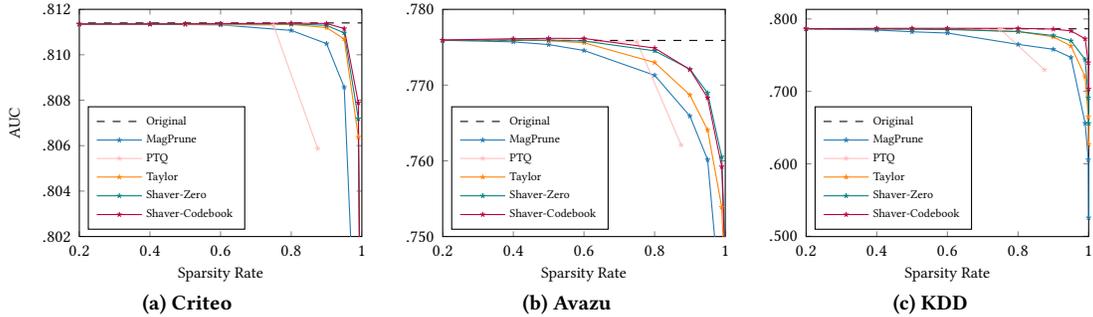
\begin{figure*}[t]
    \centering
    \subcaptionbox{Criteo}
    {
        \vspace{-.2cm}
        \pgfplotstableread[col sep=comma]{figures/rq2/data/criteo_dcn.csv}\criteodcn
\definecolor{blue1}{rgb}{0.121569,0.466667,0.705882}%
\begin{tikzpicture}
\tikzstyle{every node}=[font=\scriptsize]

\begin{axis}[
    xlabel={Sparsity Rate},
    ylabel={AUC},
    width=0.3\textwidth,
    xmin=0.2, xmax=1.0,
    ymin=0.802, ymax=0.812,
    xtick={0.2, 0.4, 0.6, 0.8, 1.0},
    y label style={at={(axis description cs:-0.18,.5)}},
    x label style={at={(axis description cs:0.5,-0.1)}},
    yticklabel style={
        /pgf/number format/precision=3,
        /pgf/number format/fixed zerofill,
        /pgf/number format/skip 0.=true
    },
    every x tick label/.append style={font=\scriptsize},
    legend pos=south west,
    legend style={nodes={scale=0.75}},
    ymajorgrids=false,
    grid style=dashed,
    tickwidth=0.07cm,
    every axis plot/.append style={
        mark=star,
        mark options={scale=0.5},
    },
    legend cell align=left,
]
\legend{Original, MagPrune, PTQ, Taylor, Shaver-Zero, Shaver-Codebook}

\addplot[
    color=black,
    dashed,
    mark=none,
] coordinates {
   (0.0,0.8114) (1.0,  0.8114)
};

\addplot[
    color=blue1,
] table[
    x=sparse-rate, 
    y=mag
]
{\criteodcn};

\addplot[
    color=pink,
] table[
    x=sparse-rate, 
    y=ptq
]
{\criteodcn};

\addplot[
    color=orange,
] table[
    x=sparse-rate, 
    y=taylor.pth
]
{\criteodcn};

\addplot[
    color=teal,
] table[
    x=sparse-rate, 
    y=zero.pth
]
{\criteodcn};

\addplot[
    color=purple,
] table[
    x=sparse-rate, 
    y=codebook.pth
]
{\criteodcn};

\end{axis}
\end{tikzpicture}
    }
    \subcaptionbox{Avazu}
    {
        \vspace{-.2cm}
        \pgfplotstableread[col sep=comma]{figures/rq2/data/avazu_dcn.csv}\avazudcn
\definecolor{blue1}{rgb}{0.121569,0.466667,0.705882}%
\begin{tikzpicture}
\tikzstyle{every node}=[font=\scriptsize]

\begin{axis}[
    xlabel={Sparsity Rate},
    width=0.3\textwidth,
    xmin=0.2, xmax=1.0,
    ymin=0.75, ymax=0.78,
    xtick={0.2, 0.4, 0.6, 0.8, 1.0},
    y label style={at={(axis description cs:-0.18,.5)}},
    x label style={at={(axis description cs:0.5,-0.1)}},
    yticklabel style={
        /pgf/number format/precision=3,
        /pgf/number format/fixed zerofill,
        /pgf/number format/skip 0.=true
    },
    every x tick label/.append style={font=\scriptsize},
    legend pos=south west,
    legend style={nodes={scale=0.75}},
    ymajorgrids=false,
    grid style=dashed,
    tickwidth=0.07cm,
    every axis plot/.append style={
        mark=star,
        mark options={scale=0.5},
    },
    legend cell align=left,
]
\legend{Original, MagPrune, PTQ, Taylor, Shaver-Zero, Shaver-Codebook}

\addplot[
    color=black,
    dashed,
    mark=none,
] coordinates {
   (0.0, 0.7759) (1.0,  0.7759)
};

\addplot[
    color=blue1,
] table[
    x=sparse-rate, 
    y=mag
]
{\avazudcn};

\addplot[
    color=pink,
] table[
    x=sparse-rate, 
    y=ptq
]
{\avazudcn};

\addplot[
    color=orange,
] table[
    x=sparse-rate, 
    y=taylor.pth
]
{\avazudcn};

\addplot[
    color=teal,
] table[
    x=sparse-rate, 
    y=zero.pth
]
{\avazudcn};

\addplot[
    color=purple,
] table[
    x=sparse-rate, 
    y=codebook.pth
]
{\avazudcn};

\end{axis}
\end{tikzpicture}
    }
    \subcaptionbox{KDD}
    {
        \vspace{-.2cm}
        \pgfplotstableread[col sep=comma]{figures/rq2/data/kdd_dcn.csv}\kdddcn
\definecolor{blue1}{rgb}{0.121569,0.466667,0.705882}%
\begin{tikzpicture}
\tikzstyle{every node}=[font=\scriptsize]

\begin{axis}[
    xlabel={Sparsity Rate},
    width=0.3\textwidth,
    xmin=0.2, xmax=1.0,
    xtick={0.2, 0.4, 0.6, 0.8, 1.0},
    y label style={at={(axis description cs:-0.18,.5)}},
    x label style={at={(axis description cs:0.5,-0.1)}},
    yticklabel style={
        /pgf/number format/precision=3,
        /pgf/number format/fixed zerofill,
        /pgf/number format/skip 0.=true
    },
    every x tick label/.append style={font=\scriptsize},
    legend pos=south west,
    legend style={nodes={scale=0.75}},
    ymajorgrids=false,
    grid style=dashed,
    tickwidth=0.07cm,
    every axis plot/.append style={
        mark=star,
        mark options={scale=0.5},
    },
    legend cell align=left,
]
\legend{Original, MagPrune, PTQ, Taylor, Shaver-Zero, Shaver-Codebook}

\addplot[
    color=black,
    dashed,
    mark=none,
] coordinates {
   (0.0, 0.7864) (1.0,  0.7864)
};

\addplot[
    color=blue1,
] table[
    x=sparse-rate, 
    y=mag
]
{\kdddcn};

\addplot[
    color=pink,
] table[
    x=sparse-rate, 
    y=ptq
]
{\kdddcn};

\addplot[
    color=orange,
] table[
    x=sparse-rate, 
    y=taylor.pth
]
{\kdddcn};

\addplot[
    color=teal,
] table[
    x=sparse-rate, 
    y=zero.pth
]
{\kdddcn};

\addplot[
    color=purple,
] table[
    x=sparse-rate, 
    y=codebook.pth
]
{\kdddcn};

\end{axis}
\end{tikzpicture}
    }
    \vspace{-.2cm}
    \caption{Comparative results with single-shot baselines, where DCN-Mix is used as the backbone.}
    \label{fig:rq2}
\end{figure*}

\subsection{Comparison with Non-single-shot Pruning Methods (RQ1)}

Table \ref{tab:RQ1} shows the performance of Shaver and other non-single-shot pruning methods with the DCN-Mix backbone. The baselines compared are QR \cite{qr2019}, TTRec \cite{ttrec2021}, PEP \cite{pep2021}, OptEmb \cite{optembed2022}, and CERP \cite{cerp2023}, whose detailed descriptions are provided in Appendix \ref{sec:baselines}. As DCN-Mix is generally a more performant backbone compared with DeepFM, results with the DeepFM backbone and the inference efficiency are deferred to Appendix \ref{sec:more-exp} due to page limits. For this comparison, we test three sparsity rate settings with $t=\{50\%, 80\%, 95\%\}$ as all baselines under this category needs to be retrained for every setting.
%
Across three datasets, Shaver achieves competitive results with other non-single-shot pruning methods that require costly training steps for each parameter budget.
Shaver demonstrates a strong performance compared to other methods, especially with an 80\% sparsity rate. 
While in the low sparsity rate, the performance gap between methods is small as it is generally easier to reduce parameters. 
At 95\% sparsity, Shaver delivers impressive results exceeding other methods, despite a lesser extent. 
Higher sparsity rates pose challenges for single-shot pruning due to removing a larger portion of original model parameters, thus, leading to a more significant model degradation.

\vspace{-0.3cm}
\subsection{Comparison with Single-shot Pruning Methods (RQ2)}

\begin{figure*}[!th]
\begin{minipage}[t]{1.5\columnwidth}
    \centering
    \subcaptionbox{Criteo}
    {
        \vspace{-.2cm}
        \pgfplotstableread[col sep=comma]{figures/rq3/data/criteo_deepfm.csv}\criteotable
\definecolor{blue1}{rgb}{0.121569,0.466667,0.705882}%
\begin{tikzpicture}
\tikzstyle{every node}=[font=\scriptsize]

\begin{axis}[
    xlabel={Sparsity Rate},
    ylabel={AUC},
    width=0.365\linewidth,
    xmin=0.2, xmax=1.0,
    ymin=0.8, ymax=0.812,
    xtick={0.2, 0.4, 0.6, 0.8, 1.0},
    ytick={0.798,0.802,...,0.812},
    y label style={at={(axis description cs:-0.18,.5)}},
    x label style={at={(axis description cs:0.5,-0.1)}},
    yticklabel style={
        /pgf/number format/precision=3,
        /pgf/number format/fixed zerofill,
        /pgf/number format/skip 0.=true
    },
    every x tick label/.append style={font=\scriptsize},
    legend pos=south west,
    legend style={nodes={scale=0.75}},
    ymajorgrids=false,
    grid style=dashed,
    tickwidth=0.07cm,
    every axis plot/.append style={
        mark=star,
        mark options={scale=0.5},
    },
]
\legend{$p=10\%$, $p=30\%$, $p=64\%$, $p=100\%$}

\addplot[
    color=blue1,
] table[
    x=sparse-rate, 
    y=codebook-0.0.pth
]
{\criteotable};

\addplot[
    color=orange,
] table[
    x=sparse-rate, 
    y=codebook-0.2.pth
]
{\criteotable};

\addplot[
    color=teal,
] table[
    x=sparse-rate, 
    y=codebook-0.6.pth
]
{\criteotable};

\addplot[
    color=red,
] table[
    x=sparse-rate, 
    y=codebook-1.0.pth
]
{\criteotable};

\end{axis}
\end{tikzpicture}
    }\hfill
    \subcaptionbox{Avazu}
    {
        \vspace{-.2cm}

\pgfplotstableread[col sep=comma]{figures/rq3/data/avazu_deepfm.csv}\avazudeepfm
\definecolor{blue1}{rgb}{0.121569,0.466667,0.705882}%
\begin{tikzpicture}
\tikzstyle{every node}=[font=\scriptsize]

\begin{axis}[
    xlabel={Sparsity Rate},
    width=0.365\linewidth,
    xmin=0.2, xmax=1.0,
    ymin=0.7, ymax=0.77,
    xtick={0.2, 0.4, 0.6, 0.8, 1.0},
    y label style={at={(axis description cs:-0.18,.5)}},
    x label style={at={(axis description cs:0.5,-0.1)}},
    yticklabel style={
        /pgf/number format/precision=3,
        /pgf/number format/fixed zerofill,
        /pgf/number format/skip 0.=true
    },
    every x tick label/.append style={font=\scriptsize},
    legend pos=south west,
    legend style={nodes={scale=0.75}},
    ymajorgrids=false,
    grid style=dashed,
    tickwidth=0.07cm,
    every axis plot/.append style={
        mark=star,
        mark options={scale=0.5},
    },
]
\legend{$p=10\%$, $p=30\%$, $p=64\%$, $p=100\%$}

\addplot[
    color=blue1,
] table[
    x=sparse-rate, 
    y=codebook-0.0.pth
]
{\avazudeepfm};

\addplot[
    color=orange,
] table[
    x=sparse-rate, 
    y=codebook-0.2.pth
]
{\avazudeepfm};

\addplot[
    color=teal,
] table[
    x=sparse-rate, 
    y=codebook-0.6.pth
]
{\avazudeepfm};

\addplot[
    color=red,
] table[
    x=sparse-rate, 
    y=codebook-1.0.pth
]
{\avazudeepfm};

\end{axis}
\end{tikzpicture}
    }\hfill
    \vspace{-.2cm}
    \subcaptionbox{KDD}
    {
        \vspace{-.2cm}

\pgfplotstableread[col sep=comma]{figures/rq3/data/kdd_deepfm.csv}\criteotable
\definecolor{blue1}{rgb}{0.121569,0.466667,0.705882}%
\begin{tikzpicture}
\tikzstyle{every node}=[font=\scriptsize]

\begin{axis}[
    xlabel={Sparsity Rate},
    width=0.365\linewidth,
    xmin=0.2, xmax=1.0,
    ymin=0.7, ymax=0.79,
    xtick={0.2, 0.4, 0.6, 0.8, 1.0},
    y label style={at={(axis description cs:-0.18,.5)}},
    x label style={at={(axis description cs:0.5,-0.1)}},
    yticklabel style={
        /pgf/number format/precision=3,
        /pgf/number format/fixed zerofill,
        /pgf/number format/skip 0.=true
    },
    every x tick label/.append style={font=\scriptsize},
    legend pos=south west,
    legend style={nodes={scale=0.75}},
    tickwidth=0.07cm,
    ymajorgrids=false,
    grid style=dashed,
    every axis plot/.append style={
        mark=star,
        mark options={scale=0.5},
    },
]
\legend{$p=10\%$, $p=30\%$, $p=64\%$, $p=100\%$}

\addplot[
    color=blue1,
] table[
    x=sparse-rate, 
    y=codebook-0.0.pth
]
{\criteotable};

\addplot[
    color=orange,
] table[
    x=sparse-rate, 
    y=codebook-0.2.pth
]
{\criteotable};

\addplot[
    color=teal,
] table[
    x=sparse-rate, 
    y=codebook-0.6.pth
]
{\criteotable};

\addplot[
    color=red,
] table[
    x=sparse-rate, 
    y=codebook-1.0.pth
]
{\criteotable};

\end{axis}
\end{tikzpicture}
    }\hfill
    \vspace{-.1cm}
    \caption{Trade-off between performance and data size used to estimate Shapley value. $p$ is the portion of the full dataset used to calculate the Shapley value.}
    \label{fig:tradeoff-dataset-size}
\end{minipage}\hfill 
\begin{minipage}[t]{0.5\columnwidth}
\includegraphics[width=0.9\textwidth]{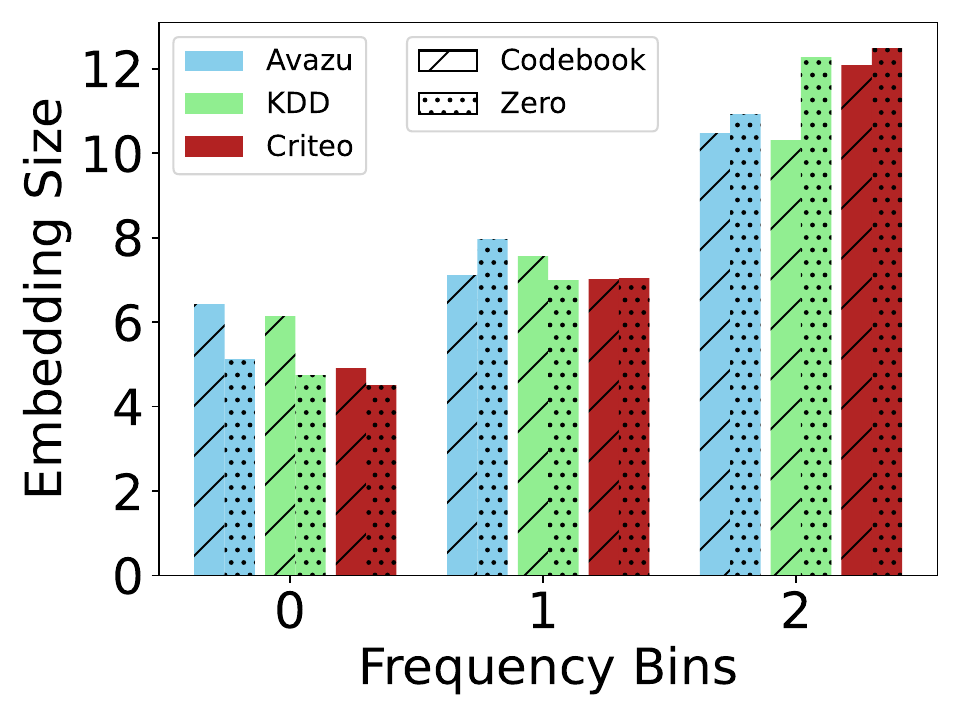} 
\vspace{-.4cm}
\caption{Comparison of how different Shaver variants prune embeddings for features in different frequency buckets.}
 \label{fig:rq4}
\end{minipage}\hfill
\end{figure*}

To make further comparisons, we evaluate our approach against other single-shot compression methods. The baselines compared are MagPrune \cite{tran2024thorough}, PTQ \cite{ptq2021}, and Taylor \cite{taylor2017} (see Appendix \ref{sec:baselines} for descriptions). Same with RQ1, DCN-Mix backbone is adopted by default, where results with DeepFM backbone are provided in Appendix \ref{sec:more-exp}. Notably, different from RQ1, as all baselines tested here are all single-shot methods, we have been able to test with a wider variety of sparsity rates 
$t=\{0.2, 0.4, 0.5, 0.6, 0.8, 0.9, 0.95, 0.99, 0.999\}$.

Figure \ref{fig:rq2} shows the results.
Shaver-Codebook is the most competitive method for KDD and Criteo, retaining the performance even with only 10\% and 5\% parameters of the original models, respectively.
The second best is Shaver-Zero, whose performance is similar to Shaver-Codebook in the Avazu dataset. 
These results indicate promising potential in applying the Shapley value in compressing models for CTR problems.
Taylor is the third best as they also utilize information from the training and validation datasets, especially in the Criteo dataset, where the difference is only noticeable under high-sparsity settings.
PTQ provides strong performance initially but could not hold up at a higher sparsity rate, as it allocates the same memory budget for all features, limiting their performance.

\subsection{Effect of Post-Pruning Finetuning (RQ3)}

\begin{table}[thb]
    \centering
    \renewcommand{\arraystretch}{0.8}
\setlength\tabcolsep{1.5pt}
    \caption{Effect of post-pruning finetuning, where $\dagger$ indicates variants that are finetuned with the pruned embeddings. The results are reported in AUC.}
    \vspace{-.3cm}
    \begin{tabular}{llccc}
    \toprule
    {Dataset} & {Method} & {$t$ = 50\%} & {$t$ = 80\%} & {$t$ = 95\%} \\
    \midrule
    \multirow[c]{4}{*}{Criteo} 
     & Shaver-Zero & 0.8114 & 0.8114 & 0.8110 \\
     & Shaver-Zero$\dagger$ & 0.8116 & 0.8114 &  0.8112 \\
      & Shaver-Codebook & 0.8114 & 0.8114 & 0.8112 \\
     & Shaver-Codebook$\dagger$ & 0.8115 & 0.8114 & 0.8112 \\
     \midrule
     
     \multirow[c]{4}{*}{Avazu} 
     & Shaver-Zero & 0.7759 & 0.7745 & 0.7690 \\
 & Shaver-Zero$\dagger$ & 0.7777 & 0.7735 & 0.7703 \\
  & Shaver-Codebook & 0.7762 & 0.7749 & 0.7683 \\
 & Shaver-Codebook$\dagger$ & 0.7781 & 0.7717 & 0.7670 \\
     \midrule
     \multirow[c]{4}{*}{KDD} 
     & Shaver-Zero & 0.7858 & 0.7826 & 0.7699 \\
 & Shaver-Zero$\dagger$ & 0.7858 & 0.7826 & 0.7775 \\
  & Shaver-Codebook & 0.7870 & 0.7870 & 0.7836 \\
 & Shaver-Codebook$\dagger$ & 0.7754 &  0.7806 & 0.7836 \\
    \bottomrule
    \end{tabular}
    \label{tab:finetune-exp}
\end{table}

To answer RQ3, Table \ref{tab:finetune-exp} shows the effect of fine-tuning pruned models. This experiment also mimics scenarios where a deployed recommendation model has sufficient computing resources to perform further on-device updates. While there is some variation in results across different datasets, in most cases, fine-tuning models further enhance their performance or maintain it at its current level, or with minimal loss. However, occasionally, fine-tuning can lead to decreased model performance, particularly for some settings in KDD.
The reason for the performance drop is that Shaver uses the validation set to estimate Shapley values, which can cause data leakage and potentially degrade the final checkpoint performance post-fine-tuning. 
Moreover, fine-tuning seems to have a larger effect on high sparsity rates with Shaver-Zero, where performance degradation is more pronounced.

\subsection{Performance and Data Size Trade-off (RQ4)}

Figure \ref{fig:tradeoff-dataset-size} depicts the trade-off between the performance and the proportion $p$ of the full dataset $\mathcal{D}$ employed. Note that $\mathcal{D}$ combines both training and validation samples in our setting. Under all configurations, we guarantee the algorithms have converged and seen all data. Empirically, the required runtime is linearly scaled with the dataset size employed, as Shaver generally converges after iterating through the whole dataset once.

In the Criteo and KDD datasets, the amount of data employed has a negligible effect on model performance. Conversely, in the Avazu dataset, the volume of data has a markedly greater impact. One possible hypothesis for this discrepancy is that despite containing more features than the Criteo dataset, the Avazu dataset has fewer instances. Consequently, the validation dataset is insufficient to represent the actual data distribution. 
Nonetheless, as our methods only require forward passes, we can effortlessly parallelize the computation compared to other training-required methods. Moreover, the overhead remains minimal, even with the entire Criteo dataset -- our worst case as our computation cost increases with the number of fields -- Shaver takes approximately 2.5 hours to compute the Shapley value. 
In comparison, for any given parameter budget, training each method generally takes over 1 hour, without including the time needed for hyperparameter tuning.

\subsection{Qualitative Analysis (RQ5)}
We aim to unveil the key difference between the traditional zero-out approach and our codebook-based approach in compressed models.
Therefore, employing DCN-Mix with a 50\% sparsity rate, we split each dataset's feature into three different bins based on frequency so that each bin had the same number of features. Figure \ref{fig:rq4} shows the average embedding size assigned per feature in each group.

The codebook approach assigns less parameter budget to high frequency. 
One plausible explanation is that the codebook already contained parameters representing the high-frequency features.
Thus, the model could have a larger budget to allocate to the lower frequency features, improving the model's performance.

\section{Conclusions}

In this paper, we propose Shaver, which employs Shapley value to fairly distribute each embedding parameter contribution and effectively prune the embedding table in CTR models. Shaver first provide an efficient method to calculate the Shapley value, and then implement a novel field-aware codebook quantization to represent the removed parameters. Extensive experiments show that Shaver achieves competitive performance without expensive retraining and fine-tuning steps for each parameter budget.

\begin{acks}
This work is supported by Australian Research Council under the streams of Future Fellowship (No. FT210100624), Linkage Project (No. LP230200892), Discovery Early Career Researcher Award (No. DE230101033), and Discovery Project (No. DP240101108 and No. DP240101814).
\end{acks}

\bibliographystyle{ACM-Reference-Format}
\balance
\bibliography{00_main}


\begin{thebibliography}{76}


\ifx \showCODEN    \undefined \def \showCODEN     #1{\unskip}     \fi
\ifx \showDOI      \undefined \def \showDOI       #1{#1}\fi
\ifx \showISBNx    \undefined \def \showISBNx     #1{\unskip}     \fi
\ifx \showISBNxiii \undefined \def \showISBNxiii  #1{\unskip}     \fi
\ifx \showISSN     \undefined \def \showISSN      #1{\unskip}     \fi
\ifx \showLCCN     \undefined \def \showLCCN      #1{\unskip}     \fi
\ifx \shownote     \undefined \def \shownote      #1{#1}          \fi
\ifx \showarticletitle \undefined \def \showarticletitle #1{#1}   \fi
\ifx \showURL      \undefined \def \showURL       {\relax}        \fi
\providecommand\bibfield[2]{#2}
\providecommand\bibinfo[2]{#2}
\providecommand\natexlab[1]{#1}
\providecommand\showeprint[2][]{arXiv:#2}

\bibitem[Aden(2012)]%
        {KDD}
\bibfield{author}{\bibinfo{person}{Yi~Wang Aden}.} \bibinfo{year}{2012}\natexlab{}.
\newblock \bibinfo{title}{KDD Cup 2012, Track 2}.
\newblock
\newblock
\urldef\tempurl%
\url{https://kaggle.com/competitions/kddcup2012-track2}
\showURL{%
\tempurl}


\bibitem[Ancona et~al\mbox{.}(2020)]%
        {ancona2020shapley}
\bibfield{author}{\bibinfo{person}{Marco Ancona}, \bibinfo{person}{Cengiz {\"O}ztireli}, {and} \bibinfo{person}{Markus Gross}.} \bibinfo{year}{2020}\natexlab{}.
\newblock \showarticletitle{Shapley value as principled metric for structured network pruning}.
\newblock \bibinfo{journal}{\emph{arXiv preprint arXiv:2006.01795}} (\bibinfo{year}{2020}).
\newblock


\bibitem[Ben-Porat and Tennenholtz(2018)]%
        {ben2018game}
\bibfield{author}{\bibinfo{person}{Omer Ben-Porat} {and} \bibinfo{person}{Moshe Tennenholtz}.} \bibinfo{year}{2018}\natexlab{}.
\newblock \showarticletitle{A game-theoretic approach to recommendation systems with strategic content providers}.
\newblock \bibinfo{journal}{\emph{NeuRIPS}} (\bibinfo{year}{2018}).
\newblock


\bibitem[Cai et~al\mbox{.}(2020)]%
        {cai2020once}
\bibfield{author}{\bibinfo{person}{Han Cai}, \bibinfo{person}{Chuang Gan}, \bibinfo{person}{Tianzhe Wang}, \bibinfo{person}{Zhekai Zhang}, {and} \bibinfo{person}{Song Han}.} \bibinfo{year}{2020}\natexlab{}.
\newblock \showarticletitle{Once-for-all: Train one network and specialize it for efficient deployment}.
\newblock \bibinfo{journal}{\emph{ICLR}} (\bibinfo{year}{2020}).
\newblock


\bibitem[Castro et~al\mbox{.}(2009)]%
        {montecarloShapley}
\bibfield{author}{\bibinfo{person}{Javier Castro}, \bibinfo{person}{Daniel G{\'o}mez}, {and} \bibinfo{person}{Juan Tejada}.} \bibinfo{year}{2009}\natexlab{}.
\newblock \showarticletitle{Polynomial calculation of the Shapley value based on sampling}.
\newblock \bibinfo{journal}{\emph{Computers \& operations research}} (\bibinfo{year}{2009}).
\newblock


\bibitem[Chen et~al\mbox{.}(2023a)]%
        {chen2023algorithms_shapley}
\bibfield{author}{\bibinfo{person}{Hugh Chen}, \bibinfo{person}{Ian~C Covert}, \bibinfo{person}{Scott~M Lundberg}, {and} \bibinfo{person}{Su-In Lee}.} \bibinfo{year}{2023}\natexlab{a}.
\newblock \showarticletitle{Algorithms to estimate Shapley value feature attributions}.
\newblock \bibinfo{journal}{\emph{Nature Machine Intelligence}} \bibinfo{volume}{5}, \bibinfo{number}{6} (\bibinfo{year}{2023}), \bibinfo{pages}{590--601}.
\newblock


\bibitem[Chen et~al\mbox{.}(2021)]%
        {rule2021}
\bibfield{author}{\bibinfo{person}{Tong Chen}, \bibinfo{person}{Hongzhi Yin}, \bibinfo{person}{Yujia Zheng}, \bibinfo{person}{Zi Huang}, \bibinfo{person}{Yang Wang}, {and} \bibinfo{person}{Meng Wang}.} \bibinfo{year}{2021}\natexlab{}.
\newblock \showarticletitle{Learning Elastic Embeddings for Customizing On-Device Recommenders}. In \bibinfo{booktitle}{\emph{SIGKDD}}. \bibinfo{pages}{138–147}.
\newblock


\bibitem[Chen et~al\mbox{.}(2023b)]%
        {cel2023}
\bibfield{author}{\bibinfo{person}{Yizhou Chen}, \bibinfo{person}{Guangda Huzhang}, \bibinfo{person}{Anxiang Zeng}, \bibinfo{person}{Qingtao Yu}, \bibinfo{person}{Hui Sun}, \bibinfo{person}{Heng-Yi Li}, \bibinfo{person}{Jingyi Li}, \bibinfo{person}{Yabo Ni}, \bibinfo{person}{Han Yu}, {and} \bibinfo{person}{Zhiming Zhou}.} \bibinfo{year}{2023}\natexlab{b}.
\newblock \showarticletitle{Clustered Embedding Learning for Recommender Systems}. In \bibinfo{booktitle}{\emph{WWW}}.
\newblock


\bibitem[Coleman et~al\mbox{.}(2023)]%
        {coleman2023unified}
\bibfield{author}{\bibinfo{person}{Benjamin Coleman}, \bibinfo{person}{Wang-Cheng Kang}, \bibinfo{person}{Matthew Fahrbach}, \bibinfo{person}{Ruoxi Wang}, \bibinfo{person}{Lichan Hong}, \bibinfo{person}{Ed~H. Chi}, {and} \bibinfo{person}{Derek~Zhiyuan Cheng}.} \bibinfo{year}{2023}\natexlab{}.
\newblock \showarticletitle{Unified Embedding: Battle-tested feature representations for web-scale ML systems}. In \bibinfo{booktitle}{\emph{NeuRIPS}}.
\newblock


\bibitem[Covert et~al\mbox{.}(2024)]%
        {covert2024stochastic}
\bibfield{author}{\bibinfo{person}{Ian Covert}, \bibinfo{person}{Chanwoo Kim}, \bibinfo{person}{Su-In Lee}, \bibinfo{person}{James Zou}, {and} \bibinfo{person}{Tatsunori Hashimoto}.} \bibinfo{year}{2024}\natexlab{}.
\newblock \showarticletitle{Stochastic Amortization: A Unified Approach to Accelerate Feature and Data Attribution}.
\newblock \bibinfo{journal}{\emph{arXiv preprint arXiv:2401.15866}} (\bibinfo{year}{2024}).
\newblock


\bibitem[Covert and Lee(2021)]%
        {covert2020improving}
\bibfield{author}{\bibinfo{person}{Ian Covert} {and} \bibinfo{person}{Su-In Lee}.} \bibinfo{year}{2021}\natexlab{}.
\newblock \showarticletitle{Improving kernelshap: Practical shapley value estimation via linear regression}.
\newblock \bibinfo{journal}{\emph{Artificial Intelligence and Statistics}} (\bibinfo{year}{2021}).
\newblock


\bibitem[Covert et~al\mbox{.}(2020)]%
        {sage2020}
\bibfield{author}{\bibinfo{person}{Ian Covert}, \bibinfo{person}{Scott~M Lundberg}, {and} \bibinfo{person}{Su-In Lee}.} \bibinfo{year}{2020}\natexlab{}.
\newblock \showarticletitle{Understanding global feature contributions with additive importance measures}.
\newblock \bibinfo{journal}{\emph{NeuRIPS}} (\bibinfo{year}{2020}).
\newblock


\bibitem[Fang et~al\mbox{.}(2023)]%
        {fang2023autoshape}
\bibfield{author}{\bibinfo{person}{Yunfei Fang}, \bibinfo{person}{Caihong Mu}, {and} \bibinfo{person}{Yi Liu}.} \bibinfo{year}{2023}\natexlab{}.
\newblock \showarticletitle{AutoShape: Automatic Design of Click-Through Rate Prediction Models Using Shapley Value}. In \bibinfo{booktitle}{\emph{Pacific Rim International Conference on Artificial Intelligence}}.
\newblock


\bibitem[Ghorbani and Zou(2020)]%
        {ghorbani2020neuron}
\bibfield{author}{\bibinfo{person}{Amirata Ghorbani} {and} \bibinfo{person}{James Zou}.} \bibinfo{year}{2020}\natexlab{}.
\newblock \showarticletitle{Neuron Shapley: discovering the responsible neurons}. In \bibinfo{booktitle}{\emph{NeuRIPS}}.
\newblock


\bibitem[Gong et~al\mbox{.}(2022)]%
        {gong2022real}
\bibfield{author}{\bibinfo{person}{Xudong Gong} {et~al\mbox{.}}} \bibinfo{year}{2022}\natexlab{}.
\newblock \showarticletitle{Real-time short video recommendation on mobile devices}. In \bibinfo{booktitle}{\emph{CIKM}}. \bibinfo{pages}{3103--3112}.
\newblock


\bibitem[Guan et~al\mbox{.}(2019)]%
        {guan2019post}
\bibfield{author}{\bibinfo{person}{Hui Guan}, \bibinfo{person}{Andrey Malevich}, \bibinfo{person}{Jiyan Yang}, \bibinfo{person}{Jongsoo Park}, {and} \bibinfo{person}{Hector Yuen}.} \bibinfo{year}{2019}\natexlab{}.
\newblock \showarticletitle{Post-training 4-bit quantization on embedding tables}.
\newblock \bibinfo{journal}{\emph{arXiv preprint arXiv:1911.02079}} (\bibinfo{year}{2019}).
\newblock


\bibitem[Guan et~al\mbox{.}(2022)]%
        {guan2022few}
\bibfield{author}{\bibinfo{person}{Jiyang Guan}, \bibinfo{person}{Zhuozhuo Tu}, \bibinfo{person}{Ran He}, {and} \bibinfo{person}{Dacheng Tao}.} \bibinfo{year}{2022}\natexlab{}.
\newblock \showarticletitle{Few-shot backdoor defense using shapley estimation}. In \bibinfo{booktitle}{\emph{CVPR}}.
\newblock


\bibitem[Guo et~al\mbox{.}(2017)]%
        {deepfm2017}
\bibfield{author}{\bibinfo{person}{Huifeng Guo}, \bibinfo{person}{Ruiming Tang}, \bibinfo{person}{Yunming Ye}, \bibinfo{person}{Zhenguo Li}, {and} \bibinfo{person}{Xiuqiang He}.} \bibinfo{year}{2017}\natexlab{}.
\newblock \showarticletitle{DeepFM: {A} Factorization-Machine based Neural Network for {CTR} Prediction}. In \bibinfo{booktitle}{\emph{IJCAI}}, \bibfield{editor}{\bibinfo{person}{Carles Sierra}} (Ed.). \bibinfo{pages}{1725--1731}.
\newblock


\bibitem[Han et~al\mbox{.}(2021)]%
        {han2021deeprec}
\bibfield{author}{\bibinfo{person}{Jialiang Han}, \bibinfo{person}{Yun Ma}, \bibinfo{person}{Qiaozhu Mei}, {and} \bibinfo{person}{Xuanzhe Liu}.} \bibinfo{year}{2021}\natexlab{}.
\newblock \showarticletitle{Deeprec: On-device deep learning for privacy-preserving sequential recommendation in mobile commerce}. In \bibinfo{booktitle}{\emph{WWW}}. \bibinfo{pages}{900--911}.
\newblock


\bibitem[Han et~al\mbox{.}(2016)]%
        {ptq2016}
\bibfield{author}{\bibinfo{person}{Song Han}, \bibinfo{person}{Huizi Mao}, {and} \bibinfo{person}{William~J. Dally}.} \bibinfo{year}{2016}\natexlab{}.
\newblock \showarticletitle{Deep Compression: Compressing Deep Neural Network with Pruning, Trained Quantization and Huffman Coding}. In \bibinfo{booktitle}{\emph{ICLR}}, \bibfield{editor}{\bibinfo{person}{Yoshua Bengio} {and} \bibinfo{person}{Yann LeCun}} (Eds.).
\newblock


\bibitem[He et~al\mbox{.}(2017)]%
        {neumf2017}
\bibfield{author}{\bibinfo{person}{Xiangnan He}, \bibinfo{person}{Lizi Liao}, \bibinfo{person}{Hanwang Zhang}, \bibinfo{person}{Liqiang Nie}, \bibinfo{person}{Xia Hu}, {and} \bibinfo{person}{Tat-Seng Chua}.} \bibinfo{year}{2017}\natexlab{}.
\newblock \showarticletitle{Neural collaborative filtering}. In \bibinfo{booktitle}{\emph{WWW}}. \bibinfo{pages}{173--182}.
\newblock


\bibitem[Jean-Baptiste~Tien(2014)]%
        {Criteo}
\bibfield{author}{\bibinfo{person}{Olivier~Chapelle Jean-Baptiste~Tien, joycenv}.} \bibinfo{year}{2014}\natexlab{}.
\newblock \bibinfo{title}{Display Advertising Challenge}.
\newblock
\newblock
\urldef\tempurl%
\url{https://kaggle.com/competitions/criteo-display-ad-challenge}
\showURL{%
\tempurl}


\bibitem[Jethani et~al\mbox{.}(2021)]%
        {jethani2021fastshap}
\bibfield{author}{\bibinfo{person}{Neil Jethani}, \bibinfo{person}{Mukund Sudarshan}, \bibinfo{person}{Ian~Connick Covert}, \bibinfo{person}{Su-In Lee}, {and} \bibinfo{person}{Rajesh Ranganath}.} \bibinfo{year}{2021}\natexlab{}.
\newblock \showarticletitle{FastSHAP: Real-time shapley value estimation}. In \bibinfo{booktitle}{\emph{ICLR}}.
\newblock


\bibitem[Joglekar et~al\mbox{.}(2019)]%
        {nis2019}
\bibfield{author}{\bibinfo{person}{Manas~R. Joglekar}, \bibinfo{person}{Cong Li}, \bibinfo{person}{Jay~K. Adams}, \bibinfo{person}{Pranav Khaitan}, {and} \bibinfo{person}{Quoc~V. Le}.} \bibinfo{year}{2019}\natexlab{}.
\newblock \showarticletitle{Neural Input Search for Large Scale Recommendation Models}.
\newblock \bibinfo{journal}{\emph{SIGKDD}} (\bibinfo{year}{2019}).
\newblock
\urldef\tempurl%
\url{https://api.semanticscholar.org/CorpusID:195874115}
\showURL{%
\tempurl}


\bibitem[Kang et~al\mbox{.}(2021)]%
        {dhe2021}
\bibfield{author}{\bibinfo{person}{Wang-Cheng Kang} {et~al\mbox{.}}} \bibinfo{year}{2021}\natexlab{}.
\newblock \showarticletitle{Learning to Embed Categorical Features without Embedding Tables for Recommendation}. In \bibinfo{booktitle}{\emph{SIGKDD}}. \bibinfo{pages}{840–850}.
\newblock
\showISBNx{9781450383325}


\bibitem[Kong et~al\mbox{.}(2022)]%
        {kong2022autosrh}
\bibfield{author}{\bibinfo{person}{Shuming Kong}, \bibinfo{person}{Weiyu Cheng}, \bibinfo{person}{Yanyan Shen}, {and} \bibinfo{person}{Linpeng Huang}.} \bibinfo{year}{2022}\natexlab{}.
\newblock \showarticletitle{Autosrh: An embedding dimensionality search framework for tabular data prediction}.
\newblock \bibinfo{journal}{\emph{IEEE Transactions on Knowledge and Data Engineering}} \bibinfo{volume}{35}, \bibinfo{number}{7} (\bibinfo{year}{2022}), \bibinfo{pages}{6673--6686}.
\newblock


\bibitem[Lai et~al\mbox{.}(2023)]%
        {lai2023adaembed}
\bibfield{author}{\bibinfo{person}{Fan Lai} {et~al\mbox{.}}} \bibinfo{year}{2023}\natexlab{}.
\newblock \showarticletitle{$\{$AdaEmbed$\}$: Adaptive Embedding for $\{$Large-Scale$\}$ Recommendation Models}. In \bibinfo{booktitle}{\emph{OSDI}}.
\newblock


\bibitem[Lee et~al\mbox{.}(2018)]%
        {lee2018snip}
\bibfield{author}{\bibinfo{person}{Namhoon Lee}, \bibinfo{person}{Thalaiyasingam Ajanthan}, {and} \bibinfo{person}{Philip Torr}.} \bibinfo{year}{2018}\natexlab{}.
\newblock \showarticletitle{SNIP: Single-shot network pruning based on connection sensitivity}. In \bibinfo{booktitle}{\emph{ICLR}}.
\newblock


\bibitem[Li et~al\mbox{.}(2023)]%
        {alpt2023}
\bibfield{author}{\bibinfo{person}{Shiwei Li} {et~al\mbox{.}}} \bibinfo{year}{2023}\natexlab{}.
\newblock \showarticletitle{Adaptive Low-Precision Training for Embeddings in Click-through Rate Prediction}. In \bibinfo{booktitle}{\emph{AAAI}}.
\newblock


\bibitem[Li et~al\mbox{.}(2024)]%
        {li_survey2024}
\bibfield{author}{\bibinfo{person}{Shiwei Li} {et~al\mbox{.}}} \bibinfo{year}{2024}\natexlab{}.
\newblock \showarticletitle{Embedding Compression in Recommender Systems: A Survey}.
\newblock \bibinfo{journal}{\emph{ACM Comput. Surv.}} (\bibinfo{date}{jan} \bibinfo{year}{2024}).
\newblock


\bibitem[Liang et~al\mbox{.}(2024)]%
        {legcf2024}
\bibfield{author}{\bibinfo{person}{Xurong Liang}, \bibinfo{person}{Tong Chen}, \bibinfo{person}{Lizhen Cui}, \bibinfo{person}{Yang Wang}, \bibinfo{person}{Meng Wang}, {and} \bibinfo{person}{Hongzhi Yin}.} \bibinfo{year}{2024}\natexlab{}.
\newblock \showarticletitle{Lightweight Embeddings for Graph Collaborative Filtering}. In \bibinfo{booktitle}{\emph{SIGIR}}. \bibinfo{pages}{1296–1306}.
\newblock


\bibitem[Liang et~al\mbox{.}(2023)]%
        {cerp2023}
\bibfield{author}{\bibinfo{person}{Xurong Liang}, \bibinfo{person}{Tong Chen}, \bibinfo{person}{Quoc Viet~Hung Nguyen}, \bibinfo{person}{Jianxin Li}, {and} \bibinfo{person}{Hongzhi Yin}.} \bibinfo{year}{2023}\natexlab{}.
\newblock \bibinfo{title}{Learning Compact Compositional Embeddings via Regularized Pruning for Recommendation}.
\newblock
\newblock
\showeprint[arxiv]{2309.03518}~[cs.IR]


\bibitem[Liu et~al\mbox{.}(2021)]%
        {pep2021}
\bibfield{author}{\bibinfo{person}{Siyi Liu}, \bibinfo{person}{Chen Gao}, \bibinfo{person}{Yihong Chen}, \bibinfo{person}{Depeng Jin}, {and} \bibinfo{person}{Yong Li}.} \bibinfo{year}{2021}\natexlab{}.
\newblock \showarticletitle{Learnable Embedding Sizes for Recommender Systems}. In \bibinfo{booktitle}{\emph{ICLR}}.
\newblock
\urldef\tempurl%
\url{https://openreview.net/forum?id=vQzcqQWIS0q}
\showURL{%
\tempurl}


\bibitem[Liu et~al\mbox{.}(2025)]%
        {liu2025cafe}
\bibfield{author}{\bibinfo{person}{Zirui Liu}, \bibinfo{person}{Hailin Zhang}, \bibinfo{person}{Boxuan Chen}, \bibinfo{person}{Zihan Jiang}, \bibinfo{person}{Yikai Zhao}, \bibinfo{person}{Yangyu Tao}, \bibinfo{person}{Tong Yang}, {and} \bibinfo{person}{Bin Cui}.} \bibinfo{year}{2025}\natexlab{}.
\newblock \showarticletitle{CAFE+: Towards Compact, Adaptive, and Fast Embedding for Large-scale Online Recommendation Models}.
\newblock \bibinfo{journal}{\emph{ACM Transactions on Information Systems}} (\bibinfo{year}{2025}).
\newblock


\bibitem[Lundberg et~al\mbox{.}(2020)]%
        {treeshap2020}
\bibfield{author}{\bibinfo{person}{Scott~M Lundberg}, \bibinfo{person}{Gabriel Erion}, \bibinfo{person}{Hugh Chen}, \bibinfo{person}{Alex DeGrave}, \bibinfo{person}{Jordan~M Prutkin}, \bibinfo{person}{Bala Nair}, \bibinfo{person}{Ronit Katz}, \bibinfo{person}{Jonathan Himmelfarb}, \bibinfo{person}{Nisha Bansal}, {and} \bibinfo{person}{Su-In Lee}.} \bibinfo{year}{2020}\natexlab{}.
\newblock \showarticletitle{From local explanations to global understanding with explainable AI for trees}.
\newblock \bibinfo{journal}{\emph{Nature machine intelligence}} (\bibinfo{year}{2020}).
\newblock


\bibitem[Lundberg and Lee(2017)]%
        {shap2017}
\bibfield{author}{\bibinfo{person}{Scott~M Lundberg} {and} \bibinfo{person}{Su-In Lee}.} \bibinfo{year}{2017}\natexlab{}.
\newblock \showarticletitle{A unified approach to interpreting model predictions}.
\newblock \bibinfo{journal}{\emph{NeuRIPS}} (\bibinfo{year}{2017}).
\newblock


\bibitem[Lyu et~al\mbox{.}(2022)]%
        {optembed2022}
\bibfield{author}{\bibinfo{person}{Fuyuan Lyu} {et~al\mbox{.}}} \bibinfo{year}{2022}\natexlab{}.
\newblock \showarticletitle{OptEmbed: Learning Optimal Embedding Table for Click-through Rate Prediction}. In \bibinfo{booktitle}{\emph{CIKM}}. \bibinfo{pages}{1399–1409}.
\newblock


\bibitem[Marcuzzo et~al\mbox{.}(2022)]%
        {marcuzzo2022recommendation}
\bibfield{author}{\bibinfo{person}{Matteo Marcuzzo}, \bibinfo{person}{Alessandro Zangari}, \bibinfo{person}{Andrea Albarelli}, {and} \bibinfo{person}{Andrea Gasparetto}.} \bibinfo{year}{2022}\natexlab{}.
\newblock \showarticletitle{Recommendation systems: An insight into current development and future research challenges}.
\newblock \bibinfo{journal}{\emph{IEEE Access}}  \bibinfo{volume}{10} (\bibinfo{year}{2022}), \bibinfo{pages}{86578--86623}.
\newblock


\bibitem[Minto and Haller(2021)]%
        {minto2021}
\bibfield{author}{\bibinfo{person}{Lorenzo Minto} {and} \bibinfo{person}{Moritz Haller}.} \bibinfo{year}{2021}\natexlab{}.
\newblock \showarticletitle{Using Federated Learning to Improve Brave's On-Device Recommendations While Protecting Your Privacy}.
\newblock  (\bibinfo{year}{2021}).
\newblock
\urldef\tempurl%
\url{https://brave.com/blog/federated-learning/}
\showURL{%
\tempurl}


\bibitem[Molchanov et~al\mbox{.}(2017)]%
        {taylor2017}
\bibfield{author}{\bibinfo{person}{Pavlo Molchanov}, \bibinfo{person}{Stephen Tyree}, \bibinfo{person}{Tero Karras}, \bibinfo{person}{Timo Aila}, {and} \bibinfo{person}{Jan Kautz}.} \bibinfo{year}{2017}\natexlab{}.
\newblock \showarticletitle{Pruning Convolutional Neural Networks for Resource Efficient Inference}. In \bibinfo{booktitle}{\emph{ICLR}}.
\newblock


\bibitem[Mudigere et~al\mbox{.}(2022)]%
        {metarecsys2022}
\bibfield{author}{\bibinfo{person}{Dheevatsa Mudigere} {et~al\mbox{.}}} \bibinfo{year}{2022}\natexlab{}.
\newblock \showarticletitle{Software-Hardware Co-Design for Fast and Scalable Training of Deep Learning Recommendation Models}. In \bibinfo{booktitle}{\emph{Proceedings of the 49th Annual International Symposium on Computer Architecture}}. \bibinfo{pages}{993–1011}.
\newblock


\bibitem[Nagel et~al\mbox{.}(2021)]%
        {ptq2021}
\bibfield{author}{\bibinfo{person}{Markus Nagel}, \bibinfo{person}{Marios Fournarakis}, \bibinfo{person}{Rana~Ali Amjad}, \bibinfo{person}{Yelysei Bondarenko}, \bibinfo{person}{Mart Van~Baalen}, {and} \bibinfo{person}{Tijmen Blankevoort}.} \bibinfo{year}{2021}\natexlab{}.
\newblock \showarticletitle{A white paper on neural network quantization}.
\newblock \bibinfo{journal}{\emph{arXiv preprint arXiv:2106.08295}} (\bibinfo{year}{2021}).
\newblock


\bibitem[Pansare et~al\mbox{.}(2022)]%
        {memcom2020}
\bibfield{author}{\bibinfo{person}{Niketan Pansare} {et~al\mbox{.}}} \bibinfo{year}{2022}\natexlab{}.
\newblock \showarticletitle{Learning compressed embeddings for on-device inference}.
\newblock \bibinfo{journal}{\emph{Proceedings of Machine Learning and Systems}}  \bibinfo{volume}{4} (\bibinfo{year}{2022}), \bibinfo{pages}{382--397}.
\newblock


\bibitem[Qu et~al\mbox{.}(2022)]%
        {SSEDS}
\bibfield{author}{\bibinfo{person}{Liang Qu}, \bibinfo{person}{Yonghong Ye}, \bibinfo{person}{Ningzhi Tang}, \bibinfo{person}{Lixin Zhang}, \bibinfo{person}{Yuhui Shi}, {and} \bibinfo{person}{Hongzhi Yin}.} \bibinfo{year}{2022}\natexlab{}.
\newblock \showarticletitle{Single-shot Embedding Dimension Search in Recommender System}. In \bibinfo{booktitle}{\emph{SIGIR}}. \bibinfo{pages}{513–522}.
\newblock


\bibitem[Qu et~al\mbox{.}(2024a)]%
        {bet2024}
\bibfield{author}{\bibinfo{person}{Yunke Qu}, \bibinfo{person}{Tong Chen}, \bibinfo{person}{Quoc Viet~Hung Nguyen}, {and} \bibinfo{person}{Hongzhi Yin}.} \bibinfo{year}{2024}\natexlab{a}.
\newblock \showarticletitle{Budgeted Embedding Table For Recommender Systems}. In \bibinfo{booktitle}{\emph{WSDM}} \emph{(\bibinfo{series}{WSDM '24})}. \bibinfo{pages}{557–566}.
\newblock


\bibitem[Qu et~al\mbox{.}(2023)]%
        {yunke_continous2023}
\bibfield{author}{\bibinfo{person}{Yunke Qu}, \bibinfo{person}{Tong Chen}, \bibinfo{person}{Xiangyu Zhao}, \bibinfo{person}{Lizhen Cui}, \bibinfo{person}{Kai Zheng}, {and} \bibinfo{person}{Hongzhi Yin}.} \bibinfo{year}{2023}\natexlab{}.
\newblock \showarticletitle{Continuous Input Embedding Size Search For Recommender Systems}. In \bibinfo{booktitle}{\emph{SIGIR}}. \bibinfo{pages}{708–717}.
\newblock


\bibitem[Qu et~al\mbox{.}(2024b)]%
        {qu2024sparser}
\bibfield{author}{\bibinfo{person}{Yunke Qu}, \bibinfo{person}{Liang Qu}, \bibinfo{person}{Tong Chen}, \bibinfo{person}{Xiangyu Zhao}, \bibinfo{person}{Jianxin Li}, {and} \bibinfo{person}{Hongzhi Yin}.} \bibinfo{year}{2024}\natexlab{b}.
\newblock \showarticletitle{Sparser Training for On-Device Recommendation Systems}.
\newblock \bibinfo{journal}{\emph{arXiv preprint arXiv:2411.12205}} (\bibinfo{year}{2024}).
\newblock


\bibitem[Qu et~al\mbox{.}(2024c)]%
        {qu2024scalable}
\bibfield{author}{\bibinfo{person}{Yunke Qu}, \bibinfo{person}{Liang Qu}, \bibinfo{person}{Tong Chen}, \bibinfo{person}{Xiangyu Zhao}, \bibinfo{person}{Quoc Viet~Hung Nguyen}, {and} \bibinfo{person}{Hongzhi Yin}.} \bibinfo{year}{2024}\natexlab{c}.
\newblock \showarticletitle{Scalable Dynamic Embedding Size Search for Streaming Recommendation}.
\newblock \bibinfo{journal}{\emph{arXiv preprint arXiv:2407.15411}} (\bibinfo{year}{2024}).
\newblock


\bibitem[Rendle(2010)]%
        {fm2010}
\bibfield{author}{\bibinfo{person}{Steffen Rendle}.} \bibinfo{year}{2010}\natexlab{}.
\newblock \showarticletitle{Factorization Machines}. In \bibinfo{booktitle}{\emph{ICDM}}. \bibinfo{pages}{995--1000}.
\newblock
\urldef\tempurl%
\url{https://doi.org/10.1109/ICDM.2010.127}
\showDOI{\tempurl}


\bibitem[Roth(1988)]%
        {roth1988shapley}
\bibfield{author}{\bibinfo{person}{Alvin~E Roth}.} \bibinfo{year}{1988}\natexlab{}.
\newblock \bibinfo{booktitle}{\emph{The Shapley value: essays in honor of Lloyd S. Shapley}}.
\newblock \bibinfo{publisher}{Cambridge University Press}.
\newblock


\bibitem[Rozemberczki and Sarkar(2021)]%
        {troupe2021}
\bibfield{author}{\bibinfo{person}{Benedek Rozemberczki} {and} \bibinfo{person}{Rik Sarkar}.} \bibinfo{year}{2021}\natexlab{}.
\newblock \showarticletitle{The Shapley Value of Classifiers in Ensemble Games}. In \bibinfo{booktitle}{\emph{CIKM}}. \bibinfo{pages}{1558–1567}.
\newblock


\bibitem[Shi et~al\mbox{.}(2020)]%
        {qr2019}
\bibfield{author}{\bibinfo{person}{Hao-Jun~Michael Shi}, \bibinfo{person}{Dheevatsa Mudigere}, \bibinfo{person}{Maxim Naumov}, {and} \bibinfo{person}{Jiyan Yang}.} \bibinfo{year}{2020}\natexlab{}.
\newblock \showarticletitle{Compositional Embeddings Using Complementary Partitions for Memory-Efficient Recommendation Systems}. In \bibinfo{booktitle}{\emph{KDD}}. \bibinfo{pages}{165–175}.
\newblock


\bibitem[Statista(2024)]%
        {statista2024}
\bibfield{author}{\bibinfo{person}{Statista}.} \bibinfo{year}{2024}\natexlab{}.
\newblock \showarticletitle{Online advertising revenue in the United States from 2000 to 2023}.
\newblock  (\bibinfo{year}{2024}).
\newblock
\urldef\tempurl%
\url{https://www.statista.com/statistics/183816/us-online-advertising-revenue-since-2000/}
\showURL{%
\tempurl}


\bibitem[Steve~Wang(2014)]%
        {Avazu}
\bibfield{author}{\bibinfo{person}{Will~Cukierski Steve~Wang}.} \bibinfo{year}{2014}\natexlab{}.
\newblock \bibinfo{title}{The Avazu Dataset}.
\newblock
\newblock
\urldef\tempurl%
\url{https://kaggle.com/competitions/avazu-ctr-prediction}
\showURL{%
\tempurl}


\bibitem[Tran et~al\mbox{.}(2024)]%
        {tran2024thorough}
\bibfield{author}{\bibinfo{person}{Hung~Vinh Tran}, \bibinfo{person}{Tong Chen}, \bibinfo{person}{Quoc Viet~Hung Nguyen}, \bibinfo{person}{Zi Huang}, \bibinfo{person}{Lizhen Cui}, {and} \bibinfo{person}{Hongzhi Yin}.} \bibinfo{year}{2024}\natexlab{}.
\newblock \showarticletitle{A Thorough Performance Benchmarking on Lightweight Embedding-based Recommender Systems}.
\newblock \bibinfo{journal}{\emph{arXiv preprint arXiv:2406.17335}} (\bibinfo{year}{2024}).
\newblock


\bibitem[Wang et~al\mbox{.}(2020)]%
        {wang2020next}
\bibfield{author}{\bibinfo{person}{Qinyong Wang}, \bibinfo{person}{Hongzhi Yin}, \bibinfo{person}{Tong Chen}, \bibinfo{person}{Zi Huang}, \bibinfo{person}{Hao Wang}, \bibinfo{person}{Yanchang Zhao}, {and} \bibinfo{person}{Nguyen~Quoc Viet~Hung}.} \bibinfo{year}{2020}\natexlab{}.
\newblock \showarticletitle{Next point-of-interest recommendation on resource-constrained mobile devices}. In \bibinfo{booktitle}{\emph{WWW}}.
\newblock


\bibitem[Wang et~al\mbox{.}(2021)]%
        {dcnv2}
\bibfield{author}{\bibinfo{person}{Ruoxi Wang} {et~al\mbox{.}}} \bibinfo{year}{2021}\natexlab{}.
\newblock \showarticletitle{Dcn v2: Improved deep \& cross network and practical lessons for web-scale learning to rank systems}. In \bibinfo{booktitle}{\emph{WWW}}. \bibinfo{pages}{1785--1797}.
\newblock


\bibitem[Wang et~al\mbox{.}(2024)]%
        {wang2024dynamic}
\bibfield{author}{\bibinfo{person}{Shuyao Wang}, \bibinfo{person}{Yongduo Sui}, \bibinfo{person}{Jiancan Wu}, \bibinfo{person}{Zhi Zheng}, {and} \bibinfo{person}{Hui Xiong}.} \bibinfo{year}{2024}\natexlab{}.
\newblock \showarticletitle{Dynamic Sparse Learning: A Novel Paradigm for Efficient Recommendation}. In \bibinfo{booktitle}{\emph{WSDM}}. \bibinfo{pages}{740--749}.
\newblock


\bibitem[Wang et~al\mbox{.}(2023)]%
        {wang2023single}
\bibfield{author}{\bibinfo{person}{Yejing Wang}, \bibinfo{person}{Zhaocheng Du}, \bibinfo{person}{Xiangyu Zhao}, \bibinfo{person}{Bo Chen}, \bibinfo{person}{Huifeng Guo}, \bibinfo{person}{Ruiming Tang}, {and} \bibinfo{person}{Zhenhua Dong}.} \bibinfo{year}{2023}\natexlab{}.
\newblock \showarticletitle{Single-shot feature selection for multi-task recommendations}. In \bibinfo{booktitle}{\emph{SIGIR}}. \bibinfo{pages}{341--351}.
\newblock


\bibitem[Xia et~al\mbox{.}(2022)]%
        {xia2022device}
\bibfield{author}{\bibinfo{person}{Xin Xia}, \bibinfo{person}{Hongzhi Yin}, \bibinfo{person}{Junliang Yu}, \bibinfo{person}{Qinyong Wang}, \bibinfo{person}{Guandong Xu}, {and} \bibinfo{person}{Quoc Viet~Hung Nguyen}.} \bibinfo{year}{2022}\natexlab{}.
\newblock \showarticletitle{On-device next-item recommendation with self-supervised knowledge distillation}. In \bibinfo{booktitle}{\emph{SIGIR}}.
\newblock


\bibitem[Xia et~al\mbox{.}(2023a)]%
        {xia2023efficient}
\bibfield{author}{\bibinfo{person}{Xin Xia}, \bibinfo{person}{Junliang Yu}, \bibinfo{person}{Qinyong Wang}, \bibinfo{person}{Chaoqun Yang}, \bibinfo{person}{Nguyen Quoc~Viet Hung}, {and} \bibinfo{person}{Hongzhi Yin}.} \bibinfo{year}{2023}\natexlab{a}.
\newblock \showarticletitle{Efficient on-device session-based recommendation}.
\newblock \bibinfo{journal}{\emph{TOIS}} (\bibinfo{year}{2023}).
\newblock


\bibitem[Xia et~al\mbox{.}(2023b)]%
        {xia2023towards}
\bibfield{author}{\bibinfo{person}{Xin Xia}, \bibinfo{person}{Junliang Yu}, \bibinfo{person}{Guandong Xu}, {and} \bibinfo{person}{Hongzhi Yin}.} \bibinfo{year}{2023}\natexlab{b}.
\newblock \showarticletitle{Towards communication-efficient model updating for on-device session-based recommendation}. In \bibinfo{booktitle}{\emph{CIKM}}. \bibinfo{pages}{2795--2804}.
\newblock


\bibitem[Xu et~al\mbox{.}(2021)]%
        {xu2021agile}
\bibfield{author}{\bibinfo{person}{Zhiqiang Xu}, \bibinfo{person}{Dong Li}, \bibinfo{person}{Weijie Zhao}, \bibinfo{person}{Xing Shen}, \bibinfo{person}{Tianbo Huang}, \bibinfo{person}{Xiaoyun Li}, {and} \bibinfo{person}{Ping Li}.} \bibinfo{year}{2021}\natexlab{}.
\newblock \showarticletitle{Agile and accurate CTR prediction model training for massive-scale online advertising systems}. In \bibinfo{booktitle}{\emph{Proceedings of the International Conference on Management of Data (SIGMOD)}}.
\newblock


\bibitem[Yan et~al\mbox{.}(2021)]%
        {amtl2021}
\bibfield{author}{\bibinfo{person}{Bencheng Yan}, \bibinfo{person}{Pengjie Wang}, \bibinfo{person}{Kai Zhang}, \bibinfo{person}{Wei Lin}, \bibinfo{person}{Kuang-Chih Lee}, \bibinfo{person}{Jian Xu}, {and} \bibinfo{person}{Bo Zheng}.} \bibinfo{year}{2021}\natexlab{}.
\newblock \showarticletitle{Learning Effective and Efficient Embedding via an Adaptively-Masked Twins-based Layer}. In \bibinfo{booktitle}{\emph{CIKM}}.
\newblock


\bibitem[Yang et~al\mbox{.}(2020)]%
        {yang2020mixed}
\bibfield{author}{\bibinfo{person}{Jie~Amy Yang}, \bibinfo{person}{Jianyu Huang}, \bibinfo{person}{Jongsoo Park}, \bibinfo{person}{Ping Tak~Peter Tang}, {and} \bibinfo{person}{Andrew Tulloch}.} \bibinfo{year}{2020}\natexlab{}.
\newblock \showarticletitle{Mixed-precision embedding using a cache}.
\newblock \bibinfo{journal}{\emph{arXiv preprint arXiv:2010.11305}} (\bibinfo{year}{2020}).
\newblock


\bibitem[Yin et~al\mbox{.}(2021)]%
        {ttrec2021}
\bibfield{author}{\bibinfo{person}{Chunxing Yin}, \bibinfo{person}{Bilge Acun}, \bibinfo{person}{Xing Liu}, {and} \bibinfo{person}{Carole-Jean Wu}.} \bibinfo{year}{2021}\natexlab{}.
\newblock \showarticletitle{TT-Rec: Tensor Train Compression for Deep Learning Recommendation Models}.
\newblock \bibinfo{journal}{\emph{ArXiv}}  \bibinfo{volume}{abs/2101.11714} (\bibinfo{year}{2021}).
\newblock
\urldef\tempurl%
\url{https://api.semanticscholar.org/CorpusID:231719841}
\showURL{%
\tempurl}


\bibitem[Yin et~al\mbox{.}(2024a)]%
        {yin2024device_tutorial}
\bibfield{author}{\bibinfo{person}{Hongzhi Yin}, \bibinfo{person}{Tong Chen}, \bibinfo{person}{Liang Qu}, {and} \bibinfo{person}{Bin Cui}.} \bibinfo{year}{2024}\natexlab{a}.
\newblock \showarticletitle{On-Device Recommender Systems: A Tutorial on The New-Generation Recommendation Paradigm}. In \bibinfo{booktitle}{\emph{Companion Proceedings of the ACM on Web Conference 2024}}. \bibinfo{pages}{1280--1283}.
\newblock


\bibitem[Yin et~al\mbox{.}(2024b)]%
        {yin2024device}
\bibfield{author}{\bibinfo{person}{Hongzhi Yin}, \bibinfo{person}{Liang Qu}, \bibinfo{person}{Tong Chen}, \bibinfo{person}{Wei Yuan}, \bibinfo{person}{Ruiqi Zheng}, \bibinfo{person}{Jing Long}, \bibinfo{person}{Xin Xia}, \bibinfo{person}{Yuhui Shi}, {and} \bibinfo{person}{Chengqi Zhang}.} \bibinfo{year}{2024}\natexlab{b}.
\newblock \showarticletitle{On-Device Recommender Systems: A Comprehensive Survey}.
\newblock \bibinfo{journal}{\emph{arXiv preprint arXiv:2401.11441}} (\bibinfo{year}{2024}).
\newblock


\bibitem[Yu et~al\mbox{.}(2019)]%
        {yu2019slimmable}
\bibfield{author}{\bibinfo{person}{Jiahui Yu}, \bibinfo{person}{Linjie Yang}, \bibinfo{person}{Ning Xu}, \bibinfo{person}{Jianchao Yang}, {and} \bibinfo{person}{Thomas Huang}.} \bibinfo{year}{2019}\natexlab{}.
\newblock \showarticletitle{Slimmable neural networks}.
\newblock \bibinfo{journal}{\emph{ICLR}} (\bibinfo{year}{2019}).
\newblock


\bibitem[Zhang et~al\mbox{.}(2023)]%
        {zhang2023experimental}
\bibfield{author}{\bibinfo{person}{Hailin Zhang} {et~al\mbox{.}}} \bibinfo{year}{2023}\natexlab{}.
\newblock \bibinfo{title}{Experimental Analysis of Large-scale Learnable Vector Storage Compression}.
\newblock
\newblock
\showeprint[arxiv]{2311.15578}~[cs.LG]


\bibitem[Zhang et~al\mbox{.}(2024)]%
        {zhang2024cafe}
\bibfield{author}{\bibinfo{person}{Hailin Zhang}, \bibinfo{person}{Zirui Liu}, \bibinfo{person}{Boxuan Chen}, \bibinfo{person}{Yikai Zhao}, \bibinfo{person}{Tong Zhao}, \bibinfo{person}{Tong Yang}, {and} \bibinfo{person}{Bin Cui}.} \bibinfo{year}{2024}\natexlab{}.
\newblock \showarticletitle{CAFE: Towards Compact, Adaptive, and Fast Embedding for Large-scale Recommendation Models}.
\newblock \bibinfo{journal}{\emph{Proceedings of the ACM on Management of Data}} (\bibinfo{year}{2024}).
\newblock


\bibitem[Zhao et~al\mbox{.}(2021)]%
        {autodim2021}
\bibfield{author}{\bibinfo{person}{Xiangyu Zhao} {et~al\mbox{.}}} \bibinfo{year}{2021}\natexlab{}.
\newblock \showarticletitle{AutoDim: Field-aware Embedding Dimension Searchin Recommender Systems}. In \bibinfo{booktitle}{\emph{WWW}}. \bibinfo{pages}{3015–3022}.
\newblock


\bibitem[Zhaok et~al\mbox{.}(2021)]%
        {autoemb2021}
\bibfield{author}{\bibinfo{person}{Xiangyu Zhaok} {et~al\mbox{.}}} \bibinfo{year}{2021}\natexlab{}.
\newblock \showarticletitle{AutoEmb: Automated Embedding Dimensionality Search in Streaming Recommendations}. In \bibinfo{booktitle}{\emph{ICDM}}. \bibinfo{pages}{896--905}.
\newblock


\bibitem[Zheng et~al\mbox{.}(2024)]%
        {zheng2024personalized}
\bibfield{author}{\bibinfo{person}{Ruiqi Zheng}, \bibinfo{person}{Liang Qu}, \bibinfo{person}{Tong Chen}, \bibinfo{person}{Kai Zheng}, \bibinfo{person}{Yuhui Shi}, {and} \bibinfo{person}{Hongzhi Yin}.} \bibinfo{year}{2024}\natexlab{}.
\newblock \showarticletitle{Personalized Elastic Embedding Learning for On-Device Recommendation}.
\newblock \bibinfo{journal}{\emph{TKDE}} (\bibinfo{year}{2024}).
\newblock


\bibitem[Zhu et~al\mbox{.}(2022)]%
        {bars2022}
\bibfield{author}{\bibinfo{person}{Jieming Zhu} {et~al\mbox{.}}} \bibinfo{year}{2022}\natexlab{}.
\newblock \showarticletitle{{BARS:} Towards Open Benchmarking for Recommender Systems}. In \bibinfo{booktitle}{\emph{SIGIR}}. \bibinfo{pages}{2912--2923}.
\newblock


\bibitem[Zhu et~al\mbox{.}(2021)]%
        {barsctr2021}
\bibfield{author}{\bibinfo{person}{Jieming Zhu}, \bibinfo{person}{Jinyang Liu}, \bibinfo{person}{Shuai Yang}, \bibinfo{person}{Qi Zhang}, {and} \bibinfo{person}{Xiuqiang He}.} \bibinfo{year}{2021}\natexlab{}.
\newblock \showarticletitle{Open benchmarking for click-through rate prediction}. In \bibinfo{booktitle}{\emph{CIKM}}. \bibinfo{pages}{2759--2769}.
\newblock


\end{thebibliography}

\appendix

\section{Proofs}
\label{sec:proofs}

\begin{figure}[htbp]
    \centering
    \caption{A graphic explanation on the notion of corresponding player.}
    \includegraphics[width=0.45\textwidth]{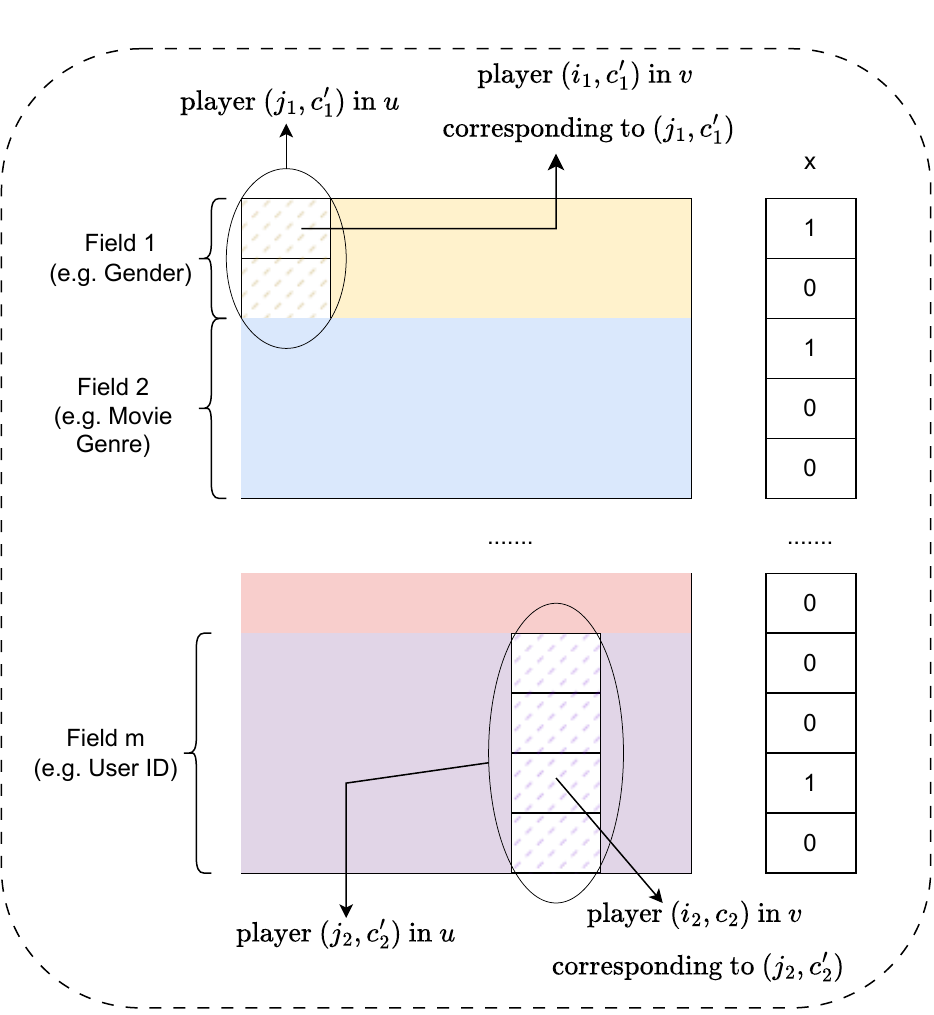}
    \label{fig:notation}
\end{figure}

\localeq*
\begin{proof}

Figure \ref{fig:notation} shows two examples of corresponding players between game $v$ and $u$.
In the proof below, we first simplify the notations by using $i$ and $j$ to represent the player $(i, c)$ and $(j, c)$.
Let's define function $g: \{0,1\}^{\len{\mathcal{N}_v}} \rightarrow \{ 0, 1 \}^{\len{\mathcal{N}_u}}$ as follow:
\begin{equation}
    g(\mathcal{S}_v) [j] =\begin{cases}
        1 \text{ if corresponding } i \in \mathcal{S}_v \\
        0 \text{ else}
    \end{cases},
\end{equation}
with $\mathcal{S}_v$  is a subset of players in game $v$.  
In other words, 
$g(\mathcal{S}_v)=\mathcal{S}_u$ if and only if $\forall j \in \mathcal{S}_u, \text{corresponding } i \in \mathcal{S}_v$ and $\forall j \in \mathcal{N}_u \setminus \mathcal{S}_u, \text{corresponding } i \notin \mathcal{S}_v$.
Intuitively, this $g$ function provide a method to map from a set $S_v$ in $v$ to the corresponding set $S_u$ in $u$. 
Let's define marginal contribution $\Delta$ as:
\begin{equation}
\begin{split}
    \Delta(i | \mathcal{S}_v) &= v(\mathcal{S} \cup i) - v(\mathcal{S}). \\    
\end{split}
\end{equation}
It is easy to show that if $g(\{ i \}) = \{ j \}$ and $g(\mathcal{S}_v) = \mathcal{S}_u$ then:
$$g(\mathcal{S}_v \cup \{ i \}) = \mathcal{S}_u \cup \{ j \}.$$
Consequently, if $g(\mathcal{S}_v)=\mathcal{S}_u \text{ and } g(\{ i \}) = \{ j \}
$, then:
\begin{equation*}
    \Delta(i|{\mathcal{S}_v})=\Delta(j|{\mathcal{S}_u}).
\end{equation*}

It is trivial to show that $\phi_v^{x,y}(i)=0$ if $g( \{ i \} ) = \emptyset$ (feature $i$ is not activated). 
We will only consider case that $\exists j \in \mathcal{N}_u, j = g( \{ i \} )$.
By definition, we have Shapley value of a player $i$ in game $v$ as:
\begin{equation}
    \phi_v^{x,y} (i) = \frac{1}{\len{\mathcal{N}_v}} \sum_{\mathcal{S}_v \subseteq \mathcal{N}_v \setminus i } \binom{\len{\mathcal{N}_v} - 1}{\len{\mathcal{S}_v}}^{-1} \Delta(i|{\mathcal{S}_v}).
    \label{eq:phi-v-ori}
\end{equation}

It is easy to show that:
\begin{equation}
    \len{\mathcal{N}_v} = nd; \len{\mathcal{N}_u} = md.
\end{equation}

With known $\mathcal{S}_u$ and $\left| \mathcal{S}_v \right|$, the number of $\mathcal{S}_v$ s.t. $g(\mathcal{S}_v) = \mathcal{S}_u$ is 

$$\binom{\len{\mathcal{N}_v} - \len{\mathcal{N}_u}}{\len{\mathcal{S}_v} - \len{\mathcal{S}_u}} = \binom{nd -md}{\len{\mathcal{S}_v} - \len{\mathcal{S}_u}},$$

as we only need to find $\len{\mathcal{S}_v} - \len{\mathcal{S}_u}$ from a pool of $\len{\mathcal{N}_v} - \len{\mathcal{N}_u}$ elements.
Thus, we can write Eq. \ref{eq:phi-v-ori} as:

\begin{equation}
\begin{split}
    \phi_v^{x,y} (i) \\
    = \frac{1}{nd} 
        \sum_{\mathcal{S}_u \subseteq \mathcal{N}_u \setminus j} \Delta(j|{\mathcal{S}_u})
        \sum_{k=\len{\mathcal{S}_u}}^{nd - md + \len{\mathcal{S}_u}} \binom{nd - md}{k - \len{\mathcal{S}_u}} {\binom{nd - 1}{k}}^{-1} \\
    = \frac{1}{nd} \sum_{\mathcal{S}_u \subseteq \mathcal{N}_u \setminus j} \Delta(j|{\mathcal{S}_u}) \sum_{k=0}^{nd - md} \binom{nd - md}{k} {\binom{nd - 1}{k + \len{\mathcal{S}_u}}}^{-1}. \\
\end{split}
\label{eq:phi-i-2}
\end{equation}

With some rearrangements, we have the following equation:

\begin{equation}
   \begin{split}
          \binom{nd - md}{k} \binom{md - 1}{ \len{\mathcal{S}_u}} {\binom{nd - 1}{k + \len{\mathcal{S}_u}}}^{-1} \\
          = \frac{nd}{md} \binom{k + \len{\mathcal{S}_u}}{\len{\mathcal{S}_u}} \binom{nd - 1 - k - \len{\mathcal{S}_u}}{md - \len{\mathcal{S}_u} - 1} \binom{nd}{md}^{-1}.
   \end{split}
   \label{eq:arrangement}
\end{equation}

And by Chu–Vandermonde Identity:
\begin{equation}
    \sum_{k=0}^{nd - md} \binom{k + \len{\mathcal{S}_u}}{\len{\mathcal{S}_u}} \binom{nd - 1 - k - \len{\mathcal{S}_u}}{md - \len{\mathcal{S}_u} - 1} =  \binom{nd}{md}.
    \label{eq:Chu–Vandermonde}
\end{equation}

With Eq. (\ref{eq:phi-i-2}), (\ref{eq:arrangement}), and (\ref{eq:Chu–Vandermonde}), we have:

\begin{equation}
\begin{split}
    \phi_v^{x,y}  (i) &= \frac{1}{nd} \sum_{\mathcal{S}_u \subseteq \mathcal{N}_u \setminus j} \Delta(j|{\mathcal{S}_u}) \frac{nd}{md} \binom{md - 1}{\len{\mathcal{S}_u}}^{-1} \\
    &= \phi_u^{x,y}  (j).
\end{split}
\end{equation}
\end{proof}

In the part below, we provide proof for the codebook optimization result:
\begin{equation}
\label{eq:codebook-target}
\begin{split}
    \mathbf{C}^{*} &= \min_{\mathbf{C}} \meanE_{\mathbf{x} \sim \mathcal{D}, \len{\mathcal{{Q}}}=B} \left[ \lVert \mathbf{E}^{\top}\mathbf{x} - \mathbf{E}_{\mathcal{Q},\mathbf{C}}^{\top}\mathbf{x} \rVert_2^2 \right],\\
    \mathbf{C}^{*}[j,:] &= \frac{\sum_{i \in \mathcal{F}_j} p_i \mathbf{V}_j[i,:]}{\sum_{i \in \mathcal{F}_j} p_i}, \\
\end{split}
\end{equation}

with $B$ is any given memory budget, $\mathcal{F}_j$ is the set of features that belong to field $j$, $p_i$ is the appear frequency of feature $i$. To begin with, we have the following equation:

\begin{equation}
\begin{split}
\meanE_{\mathbf{x} \sim \mathcal{D}, \len{\mathcal{{Q}}}=B} \left[ \lVert \mathbf{E}^{\top}\mathbf{x} - \mathbf{E}_{\mathcal{Q},\mathbf{C}}^{\top}\mathbf{x} \rVert_2^2 \right] \\
= \sum_{j=1}^{m} \sum_{i \in \mathcal{F}_j} \sum_{c=1}^{d}
P\left( (i, c) \in \mathcal{Q} \big{\vert} \len{\mathcal{Q}} = B \right) 
p_i ( \mathbf{E}[i,c] - \mathbf{C}[j,c] )^2,
\end{split}
\end{equation}
where $P\left( (i, c) \in \mathcal{Q} \big{\vert} \len{\mathcal{Q}} = B \right)$ is the probability that parameter $(i, c)$ is in $\mathcal{Q}$. Since $\mathcal{Q}$ is sampled from a uniform distribution with a budget $B$, the probability of any embedding parameter appearing in $\mathcal{Q}$ is uniform across all parameters and can be treated as a constant.
By removing the above constant, we have the following equation:
\begin{equation}
    \mathbf{C}^{*} = \min_{\mathbf{C}} \sum_{j=1}^{m} \sum_{i \in \mathcal{F}_j} p_i \lVert \mathbf{V}_j [i, :] - \mathbf{C}[j, :] \rVert_2^2.
\end{equation}
By applying Cauchy–Bunyakovsky–Schwarz inequality to the above equation, we receive the required Eq. \ref{eq:codebook-target}.

\section{Further Details of Shaver}\label{sec:shaver_details}

\subsection{Pseudo code}
\label{sec:pseudo-code}
The pseudo code is shown in Algorithm \ref{alg:sage-row}. For each data point, we first sample a random order $\mathcal{R}$ for set of players in $u(\cdot)$ (line \textbf{3}). We initialize an empty set to track the selected players (line \textbf{4}) and record the value generated by local value function $u(\cdot)$ for the empty set (line \textbf{5}). 
Then, for each player in the randomly ordered list, we add them to the selection set $\mathcal{S}$ (line \textbf{7}) and recalculate the value with this new team $\mathcal{S}$ (line \textbf{8}). With Theorem \ref{theorem:local-eq}, we update the corresponding player $(i,c)$'s contribution accordingly (lines \textbf{9-10}), and move on to the next player in the permutation. 
After iterating over all data points, the algorithm normalizes accumulated Shapley values with the total number of data samples in the dataset (line \textbf{14}).
This algorithm has the same magnitude of variance as a naive Monte Carlo approximation while reducing the sampling cost (Appendix \ref{sec:var-prove}). 
To validate the low variance of our approximation method, an algorithm convergence rate analysis is provided in Appendix \ref{sec:var-prove}.

\RestyleAlgo{ruled}
\LinesNumbered
\begin{algorithm}[t!]
\setstretch{0.8} 
\caption{Efficient Shapley Value Approximation for Embedding Parameters}
\label{alg:sage-row}
\SetKwRepeat{Do}{do}{while}
\SetKwComment{Comment}{/* }{ */}

\KwData{A trained CTR model $f(\cdot)$, dataset $\mathcal{D}$, local value function $u(\cdot)$, number of feature fields $m$, number of features $n$, embedding dimension $d$}
\KwResult{Shapley value of each player $\{{\phi}(i,c)\}_{i=1,c=1}^{n,d}$}
$\{{\phi}(i,c)\}_{i=1,c=1}^{n,d} \gets \mathbf{0}$\;
\For{$(\mathbf{x}, y) \in \mathcal{D}$}{ 
    $\mathcal{R} \gets$ a random order of players $\{(j,c)\}_{j=1,c=1}^{m,d}$\;
    $\mathcal{S} \gets \emptyset$\;
    $u_{prev} \gets u(\mathcal{S}| \mathbf{x}, y)$\;
    \For{$(j,c) \in \mathcal{R}$}{
    $\mathcal{S}\gets \mathcal{S}\cup (j,c)$\;
        $u_{next} \gets  u(\mathcal{S}|\mathbf{x}, y)$\;
        Find $(\!i,c)$ as $(\!j,c)$'s corresponding player w.r.t. $(\mathbf{x}, y)$\;
        $\phi(i,c) \gets \phi(i,c)  + u_{next} - u_{prev}$\;
        $u_{prev} \gets u_{next}$\;
    }
}
$\{{\phi}(i,c)\}_{i=1,c=1}^{n,d} \gets \{\frac{1}{|\mathcal{D}|}{\phi}(i,c)\}_{i=1,c=1}^{n,d}$
\end{algorithm}

\subsection{Algorithm Analysis}
\label{sec:var-prove}

This part provides prove for the converge rate of Algorithm. \ref{alg:sage-row}, based on \cite{sage2020}.
The central limit theorem states that if dataset $\mathcal{D}$ becomes large, the sample mean $\hat{\phi}_v(i)$ converges in distribution to a Gaussian with mean:
\begin{equation}
\label{eq:unbiased}
    \meanE \left[ 
\Delta(j,c|\mathcal{S}_u)
\right] = \meanE \left[ \Delta(i,c | \mathcal{S}_v) \right] = \phi_v(i,c).
\end{equation}
The first equation is due to Theorem \ref{theorem:local-eq}, while the second equation is due to Eq. (\ref{eq:sage-eq}). 
Eq. (\ref{eq:unbiased}) shows that Algorithm \ref{alg:sage-row} is unbiased.
Also because of central limit theorem, the estimation variance is:

\begin{equation*}
    \frac{\text{Var} \left( \Delta (j,c|\mathcal{S}_u) \right)}{\len{\mathcal{D}}}.
\end{equation*}
Although we do not have access to the numerator $\text{Var} \left( \Delta (j,c|\mathcal{S}_u) \right)$, we can conclude that the variance behaves as $O(1 / \len{\mathcal{D}})$. Thus, the algorithm would converge with large enough $\len{\mathcal{D}}$.

\subsection{Shapley Values on Synthetic Data}

To further demonstrate that our approximation aligns with the standard Shapley value, we tested on a 40-sample toy dataset with 3 feature fields, each of which has 2, 2, and 3 distinct features (7 features in total), respectively. We use the DeepFM model with a hidden size of 3, leading to $7 \times 3 = 21$
 embedding parameters.

Firstly, we compute the exact Shapley value of all 21 parameters, which are listed in table \ref{tab:true-shapley}. Then, we run our proposed method to estimate all Shapley values. Table \ref{tab:est-shapley} shows the estimated results. The absolute difference with the exact values is 0.0021 on average, verifying our unbiased estimation. Besides, our estimation takes substantially less time to run compared with exact Shapley values (20s vs. 3,849s).

\begin{table}[h]
\begin{tabular}{lllllll}
0.0180 & 0.0084 & 0.0010 & 0.0476 & 0.1787 & 0.0003 & 0.0312 \\
\midrule
0.0035 & 0.0167 & -0.0073 & 0.0162 & 0.0745 & 0.0038 & 0.2586 \\
\midrule
0.0228 & 0.0040 & 0.0122 & 0.0848 & 0.0081 & 0.0400 & -0.0001
\end{tabular}
\caption{True Shapley Value}
\label{tab:true-shapley}

\begin{tabular}{lllllll}
0.0179 & 0.0085 & 0.0062 & 0.0438 & 0.1756 & 0.0003 & 0.0292 \\
\midrule
0.0041 & 0.0203 & -0.0074 & 0.0150 & 0.0710 & 0.0090 & 0.2519 \\
\midrule
0.0253 & 0.0065 & 0.0136 & 0.0861 & 0.0079 & 0.0385 & -0.0003
\end{tabular}
\caption{Estimated Shapley value}
\label{tab:est-shapley}
\end{table}

\section{Experiment Settings}

In this section, we provide more details on our experiment settings.

\subsection{Baselines}
\label{sec:baselines}
We compare Shaver with the following embedding pruning methods for on-device content-based recommendation:
\begin{itemize}
    \item QR \cite{qr2019} divides the original embedding table into two smaller embedding tables (commonly called meta-embedding tables). The meta-embedding vectors are aggregrated by multiplication to create the final embedding vector. 
    \item TTRec \cite{ttrec2021} tackles the challenge of excessive embedding table parameters with a sequence of matrix products, employing tensor-train decomposition. 
    \item PEP \cite{pep2021} adopts a Soft-Thresholding Reparametrization (STR) trick to iteratively remove redundant parameters in the first training step.
    \item OptEmbed \cite{optembed2022} involves three consecutive steps: Train supernet and remove excessive features, evolutionary search for optimal embedding size for each field, and finally, retrain the network from the found embedding mask.
    \item CERP \cite{cerp2023} incorporate STR into two equal-size embedding tables $\mathbf{E}_1$ and $\mathbf{E}_2$. To compensate for the two sparse meta-embedding tables, the authors integrate a regularization loss into the training loss and combine two meta-embedding vectors with a sum operation, resulting in a dense vector. 
    \item MagPrune \cite{tran2024thorough} prune the trained embedding table by their magnitude. In line with our method, MagPrune also compresses the model by pruning the trained model.
    \item Taylor \cite{taylor2017} proposes to prune the model with the first-order Taylor approximation of the loss function. Similar to our method, this method also employs feedback from the dataset to calculate the attribution scores. 
    \item Post-Training Quantization (PTQ) \cite{ptq2016,ptq2021} first linearly scales the weight into $[-2^{b - 1}, 2^{b - 1} - 1]$ range, where $b$ is the bit width used. 
    Then we round up the result, converting parameters from float to int. 
    When performing inference, these compressed parameters would be scaled back to float32 with the stored bias and scale.
\end{itemize}

For Taylor and our methods, we pruned the original checkpoints as specified. For PTQ, we compressed the models from float32 to (4, 8, 16)-bit integers, the sparsity rate is calculated as $1 - b / 32$. During training, Shaver's codebook $\mathbf{C}$ is freezed. 

\subsection{Datasets}
\label{sec:dataset-preprocess}

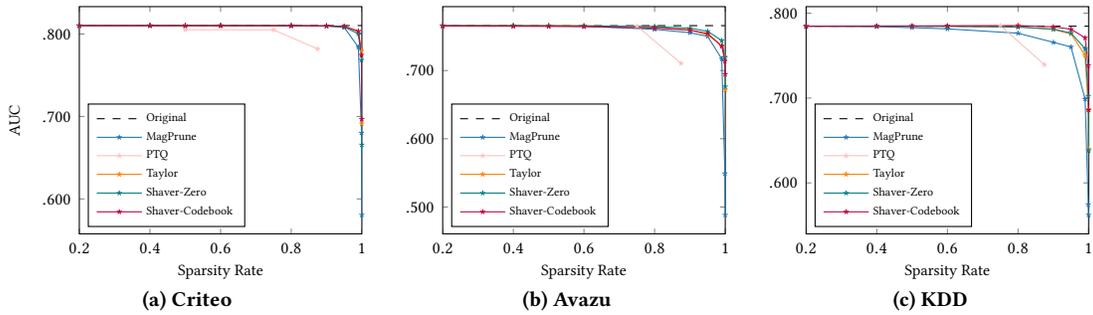
\begin{figure*}[thp]
    \centering
    \subcaptionbox{Criteo}
    {
        \vspace{-.2cm}
        \pgfplotstableread[col sep=comma]{figures/rq2/data/criteo_deepfm.csv}\criteodcn
\definecolor{blue1}{rgb}{0.121569,0.466667,0.705882}%
\begin{tikzpicture}
\tikzstyle{every node}=[font=\scriptsize]

\begin{axis}[
    xlabel={Sparsity Rate},
    ylabel={AUC},
    width=0.3\textwidth,
    xmin=0.2, xmax=1.0,
    xtick={0.2, 0.4, 0.6, 0.8, 1.0},
    y label style={at={(axis description cs:-0.18,.5)}},
    x label style={at={(axis description cs:0.5,-0.1)}},
    yticklabel style={
        /pgf/number format/precision=3,
        /pgf/number format/fixed zerofill,
        /pgf/number format/skip 0.=true
    },
    every x tick label/.append style={font=\scriptsize},
    legend pos=south west,
    legend style={nodes={scale=0.75}},
    ymajorgrids=false,
    grid style=dashed,
    tickwidth=0.07cm,
    every axis plot/.append style={
        mark=star,
        mark options={scale=0.5},
    },
    legend cell align=left,
]
\legend{Original, MagPrune, PTQ, Taylor, Shaver-Zero, Shaver-Codebook}

\addplot[
    color=black,
    dashed,
    mark=none,
] coordinates {
   (0.0,0.8102) (1.0,  0.8102)
};

\addplot[
    color=blue1,
] table[
    x=sparse-rate, 
    y=mag
]
{\criteodcn};

\addplot[
    color=pink,
] table[
    x=sparse-rate, 
    y=ptq
]
{\criteodcn};

\addplot[
    color=orange,
] table[
    x=sparse-rate, 
    y=taylor.pth
]
{\criteodcn};

\addplot[
    color=teal,
] table[
    x=sparse-rate, 
    y=zero.pth
]
{\criteodcn};

\addplot[
    color=purple,
] table[
    x=sparse-rate, 
    y=codebook.pth
]
{\criteodcn};

\end{axis}
\end{tikzpicture}
    }
    \subcaptionbox{Avazu}
    {
        \vspace{-.2cm}
        \pgfplotstableread[col sep=comma]{figures/rq2/data/avazu_deepfm.csv}\avazudcn
\definecolor{blue1}{rgb}{0.121569,0.466667,0.705882}%
\begin{tikzpicture}
\tikzstyle{every node}=[font=\scriptsize]

\begin{axis}[
    xlabel={Sparsity Rate},
    width=0.3\textwidth,
    xmin=0.2, xmax=1.0,
    xtick={0.2, 0.4, 0.6, 0.8, 1.0},
    y label style={at={(axis description cs:-0.18,.5)}},
    x label style={at={(axis description cs:0.5,-0.1)}},
    yticklabel style={
        /pgf/number format/precision=3,
        /pgf/number format/fixed zerofill,
        /pgf/number format/skip 0.=true
    },
    every x tick label/.append style={font=\scriptsize},
    legend pos=south west,
    legend style={nodes={scale=0.75}},
    ymajorgrids=false,
    grid style=dashed,
    tickwidth=0.07cm,
    every axis plot/.append style={
        mark=star,
        mark options={scale=0.5},
    },
    legend cell align=left,
]
\legend{Original, MagPrune, PTQ, Taylor, Shaver-Zero, Shaver-Codebook}

\addplot[
    color=black,
    dashed,
    mark=none,
] coordinates {
   (0.0, 0.7658) (1.0,  0.7658)
};

\addplot[
    color=blue1,
] table[
    x=sparse-rate, 
    y=mag
]
{\avazudcn};

\addplot[
    color=pink,
] table[
    x=sparse-rate, 
    y=ptq
]
{\avazudcn};

\addplot[
    color=orange,
] table[
    x=sparse-rate, 
    y=taylor2.pth
]
{\avazudcn};

\addplot[
    color=teal,
] table[
    x=sparse-rate, 
    y=zero.pth
]
{\avazudcn};

\addplot[
    color=purple,
] table[
    x=sparse-rate, 
    y=codebook.pth
]
{\avazudcn};

\end{axis}
\end{tikzpicture}
    }
    \subcaptionbox{KDD}
    {
        \vspace{-.2cm}
        \pgfplotstableread[col sep=comma]{figures/rq2/data/kdd_deepfm.csv}\kdddcn
\definecolor{blue1}{rgb}{0.121569,0.466667,0.705882}%
\begin{tikzpicture}
\tikzstyle{every node}=[font=\scriptsize]

\begin{axis}[
    xlabel={Sparsity Rate},
    width=0.3\textwidth,
    xmin=0.2, xmax=1.0,
    xtick={0.2, 0.4, 0.6, 0.8, 1.0},
    y label style={at={(axis description cs:-0.18,.5)}},
    x label style={at={(axis description cs:0.5,-0.1)}},
    yticklabel style={
        /pgf/number format/precision=3,
        /pgf/number format/fixed zerofill,
        /pgf/number format/skip 0.=true
    },
    every x tick label/.append style={font=\scriptsize},
    legend pos=south west,
    legend style={nodes={scale=0.75}},
    ymajorgrids=false,
    grid style=dashed,
    tickwidth=0.07cm,
    every axis plot/.append style={
        mark=star,
        mark options={scale=0.5},
    },
    legend cell align=left,
]
\legend{Original, MagPrune, PTQ, Taylor, Shaver-Zero, Shaver-Codebook}

\addplot[
    color=black,
    dashed,
    mark=none,
] coordinates {
   (0.0, 0.7847) (1.0,  0.7847)
};

\addplot[
    color=blue1,
] table[
    x=sparse-rate, 
    y=mag
]
{\kdddcn};

\addplot[
    color=pink,
] table[
    x=sparse-rate, 
    y=ptq
]
{\kdddcn};

\addplot[
    color=orange,
] table[
    x=sparse-rate, 
    y=taylor.pth
]
{\kdddcn};

\addplot[
    color=teal,
] table[
    x=sparse-rate, 
    y=zero.pth
]
{\kdddcn};

\addplot[
    color=purple,
] table[
    x=sparse-rate, 
    y=codebook.pth
]
{\kdddcn};

\end{axis}
\end{tikzpicture}
    }
    \vspace{-.2cm}
    \caption{Comparative results with single-shot baselines for DeepFM backbone.}
    \label{fig:rq2deepfm}
\end{figure*}

The dataset preprocess steps are described below:
\begin{itemize}
    \item \textbf{Criteo} \cite{Criteo} contains ad click data over a week. 
    Following the winning solution from the original Criteo Challenge, we discretize each value $x$ in numeric feature fields to $\lceil {\text{log}_2(x)} \rceil$ if $x > 2$.
    We replace infrequent features, which appear less than min\_count = $10$ times, with out-of-vocabulary (OOV) tokens for each field.
    \item \textbf{Avazu} \cite{Avazu} includes 10 days of click-through data.
    As a common practice, we first remove the `id' field, which has a unique value for every record. 
    Similar to Criteo dataset, we remove infrequent features and replace them with OOV tokens (min\_count = 2).
    \item \textbf{KDD} \cite{KDD} consists of records gathered from search session logs. Similar to previous two datasets, we utilize OOV tokens to represent infrequent features (min\_count = 10).
\end{itemize}

For Criteo and Avazu dataset, we directly adopt the split and preprocessed data from \cite{tran2024thorough}.

\section{Further Experimental Results}
\label{sec:more-exp}

\subsection{DeepFM Backbone Experiment Results}


 \begin{table}[htb]
 \centering
 \caption{Original (Uncompressed) Model Performance}
 \vspace{-.3cm}
     \begin{tabular}{ccllr}
     \toprule
     Dataset & Backbone & AUC & Loss & Params\\
     \midrule
     \multirow{2}{*}{Criteo}
     & DeepFM & 0.8102 & 0.4416 & 17.39M \\
     & DCN & 0.8114 & 0.4407 & 17.39M \\
     \midrule
    \multirow{2}{*}{KDD}
     & DeepFM & 0.7847 & 0.1563 & 96.31M \\
     & DCN & 0.7864 & 0.1562 & 96.31M \\
     \midrule
     \multirow{2}{*}{Avazu}
     & DeepFM &0.7658 & 0.3932 & 70.86M \\
     & DCN & 0.7759 & 0.3839& 70.86M \\
     \bottomrule
     \end{tabular}
     \label{tab:original}
 \end{table}

\renewcommand{\arraystretch}{0.8}
\setlength\tabcolsep{3.5pt}
\begin{table*}[htbp]
    \centering
        \caption{Comparative results with non-single-shot baselines under three sparsity rates. All methods adopt DeepFM as the CRS backbone. In each dataset, the best result is marked in bold and the second best one is underlined.}
    \vspace{-.3cm}
\begin{tabular}{llcccccccccc}
    \toprule
    \multirow[c]{2}{*}{Dataset} & \multirow[c]{2}{*}{Method} &Single- & \multicolumn{3}{c}{$t$ = 50\%} & \multicolumn{3}{c}{$t$ = 80\%} & \multicolumn{3}{c}{$t$ = 95\%} \\
    \cmidrule(lr){4-6}\cmidrule(lr){7-9}\cmidrule(lr){10-12}
     & & shot? & AUC$\uparrow$ & LogLoss$\downarrow$ & \#Params & AUC$\uparrow$ & LogLoss$\downarrow$ & \#Params & AUC$\uparrow$ & LogLoss$\downarrow$ & \#Params \\
    \midrule

\multirow[c]{8}{*}{Criteo} 
     & QR \cite{qr2019} & \xmark & 0.8081 & 0.4435 & 8.69M & 0.8078 & 0.4438 & 3.48M & 0.8033 & 0.4482 & 870K \\
     & TTRec \cite{ttrec2021} & \xmark & 0.8075 & 0.4442 & 9.36M & 0.8070 & 0.4446 & 3.15M & 0.8087 & 0.4431 & 870K \\
     & PEP \cite{pep2021} & \xmark & \textbf{0.8105} & \textbf{0.4414} & 8.67M & 0.8098 & 0.4419 & 3.47M & 0.8084 & 0.4432 & 806K \\
     & OptEmb \cite{optembed2022} & \xmark & \multicolumn{1}{c}{-} & \multicolumn{1}{c}{-} & \multicolumn{1}{c}{-} & 0.8088 & 0.4430 & 3.17M & 0.8043 & 0.4471 & 863K \\
     & CERP \cite{cerp2023} & \xmark & 0.8099 & 0.4419 & 8.66M & 0.8095 & 0.4423 & 3.21M & 0.8062 & 0.4454 & 864K \\
     \cmidrule(lr){2-12}
     & Shaver-Zero & \cmark & 0.8101 & 0.4417 & 8.69M & 0.8100 & \underline{0.4418} & 3.48M & 0.8085 & 0.4432 & 869K \\
     & Shaver-Zero$\dagger$ & \xmark & \underline{0.8103} & \underline{0.4415} & 8.69M & \textbf{0.8105} & \textbf{0.4414} & 3.48M & \underline{0.8104} & \underline{0.4414} & 869K \\
      & Shaver-Codebook & \cmark & 0.8101 & 0.4417 & 8.69M & \underline{0.8101} & \underline{0.4418} & 3.48M & 0.8093 & 0.4425 & 870K \\
     & Shaver-Codebook$\dagger$ & \xmark & \textbf{0.8105} & \textbf{0.4414} & 8.69M & \textbf{0.8105} & \textbf{0.4414} & 3.48M & \textbf{0.8106} & \textbf{0.4413} & 870K \\

\midrule

\multirow[c]{8}{*}{Avazu} 
   & QR \cite{qr2019} & \xmark & \underline{0.7697} & \underline{0.3864} & 35.43M & \underline{0.7695} & \underline{0.3871} & 14.17M & \underline{0.7644} & \underline{0.3905} & 3.54M \\
  & TTRec \cite{ttrec2021} & \xmark & 0.7647 & 0.3915 & 35.43M & 0.7651 & 0.3902 & 14.17M & 0.7608 & 0.3924 & 3.54M \\
   & PEP \cite{pep2021} & \xmark & 0.7647 & 0.3886 & 29.67M & 0.7633 & 0.3896 & 14.07M & 0.7590 & 0.3922 & 3.50M \\
   & OptEmb \cite{optembed2022} & \xmark & \multicolumn{1}{c}{-} & \multicolumn{1}{c}{-} & \multicolumn{1}{c}{-} & \multicolumn{1}{c}{-} & \multicolumn{1}{c}{-} & \multicolumn{1}{c}{-} & 0.7585 & 0.3936 & 3.41M \\
& CERP \cite{cerp2023} & \xmark & 0.7649 & 0.3885 & 35.30M & 0.7638 & 0.3892 & 14.14M & 0.7607 & 0.3920 & 3.54M \\
\cmidrule(lr){2-12}
 & Shaver-Zero & \cmark & 0.7655 & 0.3934 & 35.43M & 0.7645 & 0.3942 & 14.17M & 0.7575 & 0.3984 & 3.54M \\
 & Shaver-Zero$\dagger$ & \xmark & \textbf{0.7715} & \textbf{0.3855} & 35.43M & \textbf{0.7715} & \textbf{0.3857} & 14.17M & \textbf{0.7655} & \textbf{0.3902} & 3.54M \\
  & Shaver-Codebook & \cmark & 0.7647 & 0.3939 & 35.43M & 0.7623 & 0.3952 & 14.17M & 0.7543 & 0.4002 & 3.54M \\
 & Shaver-Codebook$\dagger$ & \xmark & 0.7644 & 0.3903 & 35.43M & 0.7602 & 0.3940 & 14.17M & 0.7552 & 0.3974 & 3.54M \\

\midrule

\multirow[c]{8}{*}{KDD} 
& QR \cite{qr2019} & \xmark & 0.7744 & 0.1579 & 48.16M & 0.7753 & 0.1578 & 19.26M & 0.7779 & 0.1573 & 4.82M \\
  & TTRec \cite{ttrec2021} & \xmark & 0.7762 & 0.1588 & 48.16M & 0.7727 & 0.1583 & 19.26M & \underline{0.7785} & \underline{0.1572} & 4.82M \\
   & PEP \cite{pep2021} & \xmark & 0.7801 & 0.1568 & 47.80M & 0.7745 & 0.1578 & 19.22M & 0.7730 & 0.1579 & 4.71M  \\
   & OptEmb \cite{optembed2022} & \xmark & 0.7742 & 0.1584 & 47.52M & 0.7740 & 0.1584 & 18.29M
   & 0.7663 & 0.1597 & 4.79M  
   \\
& CERP \cite{cerp2023} & \xmark & 0.7818 & 0.1567 & 48.10M & 0.7765 & 0.1578 & 18.57M & 0.7727 & 0.1581 & 4.77M \\
\cmidrule(lr){2-12}
 & Shaver-Zero & \cmark & \underline{0.7846} & \underline{0.1563} & 48.16M  & \underline{0.7836} & \underline{0.1565} & 19.26M & 0.7766 & 0.1578 &  4.82M\\
 & Shaver-Zero$\dagger$ & \xmark & \underline{0.7846} & \underline{0.1563} & 48.16M & \underline{0.7836} & \underline{0.1565} & 19.26M & 0.7766 & 0.1578 & 4.82M\\
  & Shaver-Codebook & \cmark & \textbf{0.7850} & \textbf{0.1562} & 48.16M & \textbf{0.7854} & \textbf{0.1560} & 19.26M & \textbf{0.7812} & \textbf{0.1567} & 4.82M\\
 & Shaver-Codebook$\dagger$ & \xmark & \textbf{0.7850} & \textbf{0.1562} & 48.16M & \textbf{0.7854} & \textbf{0.1560} & 19.26M & \textbf{0.7812} & \textbf{0.1567} & 4.82M \\
 
\bottomrule
\end{tabular}
    \vspace{-.3cm}

\label{tab:RQ1deepfm}
\end{table*}
\renewcommand{\arraystretch}{1.0}

Table \ref{tab:original} provides uncompressed model performance. Table \ref{tab:RQ1deepfm} and Figure \ref{fig:rq2deepfm} show the comparison results with other non-single-shot and single-shot methods of Shaver for DeepFM backbone.

For DeepFM backbone, our methods achieves the best results in all settings.
With the DeepFM backbone, where the original model has poor performance, our method can remove excessive parameters and improve model performance with fine-tuning. Thus, model fine-tuning has a more positive effect on the DeepFM backbone.
Except for the Avazu dataset, codebook approaches outperform zero baseline slightly on Criteo, and significantly on KDD. The reason for this exception is unclear. 
One plausible explanation is that zero padding can act as a regularization and better prevent overfitting compared to codebook approaches. 

Regarding single-shot comparison, in general, the performance trend is similar to DCN, with few exceptions. First, PTQ achieves worse relative results in the Criteo dataset. Second, as mentioned above, Shaver-Zero outperforms Shaver-Codebook and other methods in Avazu settings.

\subsection{Model Efficiency on Edge Devices}

To further demonstrate the real-world applicability of our method on edge devices, we deploy Shaver into three devices: (1) Raspberry Pi 4B, (2) a personal PC with CPU i7-13700K, (3) a Samsung Galaxy S23 with Snapdragon 8 Gen 2 CPU.

\begin{table}[t]
\caption{Inference efficiency benchmark results. ``Mem.'' represents inference memory peak. Time is shown in seconds, while storage and memory is shown in MiB.}
\label{tab:efficiency-cpu}
\begin{tabular}{crrrrrrr}
\toprule
\multirow[c]{2}{*}{t} & \multirow[c]{2}{*}{Storage} & \multicolumn{2}{c}{Raspberry} & \multicolumn{2}{c}{CPU} & \multicolumn{2}{c}{Android} \\
 \cmidrule(lr){3-4}  \cmidrule(lr){5-6}  \cmidrule(lr){7-8}
& & Mem. & Time & Mem. & Time & Mem. & Time \\
\midrule
80\% & 165 & 393.91 & 0.0211 & 343.125 & 0.0166 & 312.2 & 0.0419 \\
90\% & 98 & 246.953 & 0.0210 & 212.922 & 0.0167 & 191.2 & 0.0421 \\
95\% & 64 & 153.992 & 0.0205 & 155.461 & 0.0168 & 133.4 & 0.0446 \\
99\% & 38 & 103.972 & 0.0204 & 133.195 & 0.0166 & 86.2 & 0.0393 \\
\bottomrule
\end{tabular}
\end{table}

We implement our algorithm using Python with PyTorch on Raspberry Pi and PC, and Java with ExecuTorch on mobile devices.
To verify that the efficiency is able to support on-device deployment, on each device, we report the memory inference memory and inference time of the compressed DCN-Mix CRS backbone model. The test is performed with the largest dataset KDD. The inference memory is measured with memory profiler (for Python: https://pypi.org/project/memory-profiler/, for Java: Android Studio built-in memory profiler). Regarding the inference time, we take the average of 10 runs with a batch size of 64. We further report the model storage requirement as it is vital for transferring bandwidth and devices with limited disk storage \cite{han2021deeprec}. The sparse tensor structure is CSR.

The on-device results are shown table \ref{tab:efficiency-cpu}. Firstly, it is worth noting that the inference time does not change dramatically between sparsity settings, it is mainly determined by the backbone CRS model’s complexity. Secondly, we can observe a linear relationship between the model inference memory and the sparsity rate. Lastly, an overall conclusion is that CRS models pruned by Shaver is able to achieve a highly practical memory and inference time for when being deployed for on-device services.

\end{document}